\newcommand*\widefbox[1]{\fbox{\hspace{0.2em}#1\hspace{0.2em}}}
\newcommand{\mycaption}[1]{\stepcounter{figure}\raisebox{-7pt}
  {\footnotesize Fig. \thefigure.\hspace{3pt} #1}}
\newtheorem{problem}{Problem}
\newtheorem{condition}{Condition}
\newtheorem{definition}{Definition}
\newtheorem{theorem}{Theorem}
\newtheorem{lemma}{Lemma}
\newtheorem{proposition}{Proposition}
\newtheorem{remark}{Remark}
\newtheorem{assumption}{Assumption}
\newcommand*{\Scale}[2][4]{\scalebox{#1}{$#2$}}
\newcommand*{\TitleFont}{%
\usefont{\encodingdefault}{\rmdefault}{}{n}%
\fontsize{16}{20}%
\selectfont}
\DeclareMathAlphabet{\mathcal}{OMS}{cmsy}{m}{n} 
\title{\TitleFont Distributed Observers for LTI Systems}
\author{Aritra~Mitra~and~Shreyas Sundaram% <-this % stops a space
\thanks{The authors are with the School of Electrical and Computer Engineering at Purdue University. Email: {\tt \{mitra14,sundara2\}@purdue.edu}}%
}
\begin{document}
\maketitle

\begin{abstract}
We consider the problem of distributed state estimation of a linear time-invariant (LTI) system by a network of sensors. We develop a distributed observer that guarantees asymptotic reconstruction of the state for the most general class of LTI systems, sensor network topologies and sensor measurement structures. Our analysis builds upon the following key observation - a given node can reconstruct a portion of the state solely by using its own measurements and constructing appropriate Luenberger observers; hence it only needs to exchange information with neighbors (via consensus dynamics) for estimating the portion of the state that is not locally detectable. This intuitive approach leads to a new class of distributed observers with several appealing features. Furthermore, by imposing additional constraints on the system dynamics and network topology, we show that it is possible to construct a simpler version of the proposed distributed observer that achieves the same objective while admitting a fully distributed design phase. Our general framework allows extensions to time-varying networks that result from communication losses, and scenarios including faults or attacks at the nodes.
\end{abstract}

\section{Introduction}
In many applications involving large-scale complex systems (such as the power grid, transportation systems, industrial plants, etc.), the state of the system is monitored by a group of sensors spatially distributed over large sparse networks where the communication between sensors is limited (see \cite{survey1,survey2}). To model such a scenario, consider the discrete-time linear time-invariant dynamical system
\begin{equation}
\mathbf{x}[k+1] = \mathbf{Ax}[k],
\label{eqn:plant}
\end{equation}
where $k \in \mathbb{N}$ is the discrete-time index, $\mathbf{x}[k] \in {\mathbb{R}}^n$ is the state vector and  $\mathbf{A} \in {\mathbb{R}}^{ n \times n} $ is the system matrix. The state of the system is monitored by a network\footnote{The terms `network' and `communication graph' are used interchangeably throughout the paper.} of $N$ sensors, each of which receives a partial measurement of the state at every time-step. Specifically, the $i$-th sensor has access to a measurement of the state, given by
\begin{equation}
\mathbf{y}_{i}[k]=\mathbf{C}_i\mathbf{x}[k],
\label{eqn:Obsmodel}
\end{equation}
where $\mathbf{y}_{i}[k] \in {\mathbb{R}}^{r_i}$ and $\mathbf{C}_i \in {\mathbb{R}}^{r_i \times n}$. We use $\mathbf{y}[k]={\begin{bmatrix}\mathbf{y}^T_{1}[k] & \cdots & \mathbf{y}^T_{N}[k]\end{bmatrix}}^T$ to represent the collective measurement vector, and $\mathbf{C}={\begin{bmatrix}\mathbf{C}^T_{1} & \cdots & \mathbf{C}^T_{N}\end{bmatrix}}^T$ to denote the collection of the sensor observation matrices. These sensors are represented as nodes of an underlying directed communication graph which governs the information flow between the sensors.

Each node is capable of exchanging information with its neighbors and performing computational tasks. The goal of each node is to estimate the entire system state $\mathbf{x}[k]$ based on its respective (limited) state measurements and the information obtained from neighbors. This is known as the \textit{distributed state estimation problem}.

In this paper, our \textbf{objective} is to design a distributed algorithm that guarantees asymptotic reconstruction of the entire state $\mathbf{x}[k]$ at each sensor node.\footnote{The problem is formally stated in Section \ref{sec:prob_form}.} For much of the paper, we will focus on  developing theory for \textit{linear time-invariant systems} and \textit{time-invariant directed communication graphs.} In Section \ref{sec:timevar}, however, we shall establish that our proposed framework can be extended to account for certain types of time-varying networks that may arise as a consequence of intermittent communication link failures.

The paper is organized as follows. In Section \ref{sec:sysmod}, we formally describe the problem, discuss related work and summarize our contributions. Some preliminary ideas and terminology required for subsequent analysis are presented in Section \ref{sec:Prelim}. Section \ref{illustration} highlights the key ideas of our distributed estimation scheme via a simple illustrative example. In Section \ref{Estimation}, we solve the most general version of the problem whereas in Section \ref{sec:Case1} we provide a solution strategy for a simpler variant of the original problem that enjoys several implementation benefits. We discuss the extension of our framework to time-varying networks in Section \ref{sec:timevar} and provide a simulation example in Section \ref{sec:example}.  Conclusions and avenues for future work are presented in Section \ref{sec:conclusion}.

\section{System Model}
\label{sec:sysmod}
\subsection{Notation}
\label{sec:notation}
A directed graph is denoted by $\mathcal{G} =(\mathcal{V},\mathcal{E})$, where $\mathcal{V} =\{1, \cdots, N\}$ is the set of nodes and $\mathcal{E} \subseteq \mathcal{V} \times \mathcal{V} $ represents the edges. An edge from node $j$ to node $i$, denoted by $(j,i)$, implies that node $j$ can transmit information to node $i$. The neighborhood of the $i$-th node is defined as $\mathcal{N}_i \triangleq \{i\} \cup \{j\,|\,(j,i) \in \mathcal{E} \}.$  The notation $|\mathcal{V}|$ is used to denote the cardinality of a set $\mathcal{V}$. Throughout the rest of this paper, we use the terms `nodes' and `sensors' interchangeably.

The set of all eigenvalues of a matrix $\mathbf{A}$ is denoted by $sp(\mathbf{A}) \triangleq \{\lambda \in \mathbb{C}\,|\,det(\mathbf{A}-\lambda\mathbf{I}) = 0\}$. The set of all marginally stable and unstable eigenvalues of a matrix $\mathbf{A}$ is denoted by $\Lambda_{U}(\mathbf{A}) \triangleq \{\lambda \in sp(\mathbf{A})\,|\, |\lambda| \geq 1 \}$. For a matrix $\mathbf{A}$, we use $a_\mathbf{A}(\lambda)$ and $g_\mathbf{A}(\lambda)$ to denote the algebraic and geometric multiplicities, respectively, of an eigenvalue $\lambda \in sp(\mathbf{A})$. An eigenvalue $\lambda$ is said to be simple if $a_\mathbf{A}(\lambda)=g_\mathbf{A}(\lambda)=1$. For a set $\{\mathbf{A}_1, \cdots, \mathbf{A}_n\}$ of matrices, we use the notation $diag(\mathbf{A}_1, \cdots, \mathbf{A}_n)$ to refer to a block diagonal matrix with the matrix $\mathbf{A}_i$ as the $i$-th block-diagonal entry.
For a set $\mathcal{S}=\{s_1,\cdots, s_p\} \subseteq \{1,\cdots, N\}$, and a matrix $\mathbf{C}={\begin{bmatrix} \mathbf{C}^T_{1} & \cdots &  \mathbf{C}^T_{N}\end{bmatrix} }^T$, we define $\mathbf{C}_{\mathcal{S}}\triangleq {\begin{bmatrix}\mathbf{C}^{T}_{s_1} & \cdots & \mathbf{C}^{T}_{s_p}\end{bmatrix}}^T$. We use the star notation to avoid writing matrices that are either unimportant or that can be inferred from context. We use $\mathbf{I}_{r}$ to indicate an identity matrix of dimension $r \times r$.

\subsection{Problem Formulation}
\label{sec:prob_form}
Consider the LTI system given by \eqref{eqn:plant}, the measurement model specified by \eqref{eqn:Obsmodel}, and a predefined directed communication graph $\mathcal{G=(V,E)}$, where $\mathcal{V}$ represents the set of $N$ nodes (or sensors). Each node $i$ maintains an estimate $\hat{\mathbf{x}}_i[k]$ of the state $\mathbf{x}[k]$ of system \eqref{eqn:plant}, and updates such an estimate based on information received from its neighbors and its local measurements (if any). To formally define the problem under study, we use the following terminology.

\begin{definition} [Distributed Observer]
\label{defn:distobs}
A set of state estimate update and information exchange rules is called a distributed observer if $\lim_{k\to\infty} ||\hat{\mathbf{x}}_i[k]-\mathbf{x}[k]||=0, \forall i \in \{1, \cdots, N\}$, i.e., the state estimate maintained by each node asymptotically converges to the true state of the plant. 
\end{definition}

There are various technical challenges associated with constructing a distributed observer. First, if the pair $(\mathbf{A}, \mathbf{C}_i)$ is not detectable for some (or all) $i \in \{1, \cdots, N\}$, then the corresponding nodes cannot estimate the true state of the plant based on their own local measurements, thereby dictating the need to exchange information with other nodes. Second, this exchange of information is restricted by the underlying communication graph $\mathcal{G}$. With these challenges in mind, we address and solve the following problem in this paper.

\begin{problem} Design a distributed observer for LTI systems of the form \eqref{eqn:plant}, linear measurement models of the form \eqref{eqn:Obsmodel}, and time-invariant directed communication graphs.
\end{problem}

There are a variety of approaches to construct distributed observers (as defined in Definition \ref{defn:distobs}) that have been proposed in the literature, which we will now review. After that, we will summarize how our approach differs from the existing approaches, before delving into the details of our construction.

\subsection{Related Work}
\label{sec:relwork}
The papers \cite{scalar1,scalar2,scalar3} consider distributed estimation of scalar stochastic dynamical systems over general graphs; in these works, it is typically assumed that each node receives scalar local observations, leading to local observability at every node. The papers \cite{decent1,decent2} considered a version of this problem where the underlying communication graph is assumed to be complete. For more general stochastic systems, the Kalman filtering based approach to solving the distributed estimation problem has been explored by several researchers. The approach proposed in \cite{olfati1,olfati2,olfati3} relies on a two-step strategy - a Kalman filter based state estimate update rule, and a data fusion step based on average-consensus. The stability and performance issues of this method have been investigated in \cite{kalman1,kalman2}. A drawback of this method (and the ones in \cite{scalar2,infinite1,infinite2}), stems from the fact that they require a (theoretically) infinite number of data fusion iterations between two consecutive time steps of the plant dynamics in order to reach average consensus, thereby leading to a two-time-scale algorithm. More recently, finite-time data fusion has been studied in \cite{Khan1} and \cite{ICF}. Although an improvement over the infinite-time data fusion case, these methods still rely on a two-time-scale strategy. In \cite{Baras}, the authors employ an LMI-based approach for obtaining sub-optimal filter gains that minimize a quadratic cost function of the covariance matrices of the estimation errors.
  
In \cite{Khanobs1} and \cite{Khanobs2}, the authors consider the distributed observer design problem for undirected graphs; in these works, they propose a \textit{scalar-gain estimator} that runs on a single-time-scale.\footnote{By a single-time-scale algorithm, we imply an algorithm where each node operates at the same time-scale as the plant, and updates its estimate and transmits information to neighbors only once in each time-step.} They introduce the notion of ``Network Tracking Capacity" (NTC), a measure of the most unstable dynamics (in terms of the 2-norm of the state matrix) that can be estimated with bounded mean-squared error under their scheme. However, the tight coupling between the network and the plant dynamics typically limits the
set of unstable eigenvalues that can be accommodated by their method without violating the constraints imposed upon the range of the scalar gain parameter. In \cite{ugrinov}, the author approaches the observer design problem from a geometric perspective and provides separate necessary and sufficient conditions for consensus-based distributed observer design. In \cite{martins1,martins2,martins3,wang}, the authors use single-time-scale algorithms, and work under the broadest assumptions, namely that the pair $(\mathbf{A,C})$ is detectable, where $\mathbf{A}$ represents the system matrix, and $\mathbf{C}$ is the collection of all the node observation matrices. In all of these works, the authors rely on state augmentation\footnote{In these works, some sensor nodes maintain observers of dimension larger than that of the state of the plant; hence, such observers are referred to as augmented observers, and the state they estimate is referred to as an augmented state.} for casting the distributed estimation problem as a problem of designing a decentralized stabilizing controller for an LTI plant, using the notion of fixed modes \cite{fixed1,fixed2}. Specifically, in \cite{martins3}, the authors relate the distributed observer design problem for directed networks to the detectability of certain strongly connected clusters within the network, and provide a single necessary and sufficient condition for their scheme.

\subsection{Summary of Contributions}
In this paper, we provide a new approach to designing distributed observers for LTI dynamical systems. Specifically, we use the following simple, yet key observation - for each node, there may be certain portions of the state that the node can reconstruct using only its local measurements. The node thus does so. For the remaining portion of the state space, the node relies on a consensus-based update rule. The key is that those nodes that can reconstruct certain states on their own act as ``\textit{root nodes}" (or ``\textit{leaders}") in the consensus dynamics, leading the rest of the nodes to asymptotically estimate those states as well. These ideas, in a nutshell, constitute the essence of our distributed estimation strategy.

We begin by considering the most general category of systems and graphs (taken together) for which a distributed observer can be constructed, and develop an estimation scheme that enjoys the following appealing features \textit{simultaneously}, thereby differentiating our work from the existing literature discussed in Section \ref{sec:relwork}: i) it provides theoretical guarantees regarding the design of asymptotically stable estimators; (ii) it results in a single-time-scale algorithm; (iii) it does not require any state augmentation; (iv) it requires only state estimates to be exchanged locally; and (v) it works under the broadest conditions on the system and communication graph. Subsequently, for a certain subclass of systems and communication graphs, we provide a simpler fully distributed estimation scheme (at both design-  and run-time) for achieving asymptotic state reconstruction. Finally, we show that our proposed framework can be extended to guarantee asymptotic state reconstruction in the presence of communication losses that lead to time-varying networks.

Some of the results from Section \ref{Estimation} of the paper appeared (without proofs) in \cite{allerton}. This journal paper substantially expands upon the conference paper by providing full proofs of all results, a detailed analysis of classes of systems and graphs that allow efficient distributed implementations (at both the design- and run-time phases), an analysis of the robustness of our general framework to communication losses, and examples and simulations to complement the proposed theory.

\section{Preliminaries}
\label{sec:Prelim}
Before we proceed with a formal analysis of the problem of designing a distributed observer, we first identify the main consideration that shall dictate our solution strategy, namely, \textit{the relationship between the measurement structure of the nodes and the underlying communication graph}. To classify sets of systems and graphs based on this relationship, we need to first establish some notation. Accordingly, for each node $i$, we denote the detectable and undetectable eigenvalues\footnote{Given a pair $(\mathbf{A},\mathbf{C}_i)$, an eigenvalue $\lambda \in \Lambda_{U}(\mathbf{A})$ is said to be detectable if $rank\left[\begin{smallmatrix} \mathbf{A}-\lambda\mathbf{I}_n \\ \mathbf{C}_i \end{smallmatrix}\right] = n$. Each stable eigenvalue of $\mathbf{A}$ is by default considered to be detectable.} of $\mathbf{A}$ by the sets $\mathcal{O}_{i}$ and $\mathcal{UO}_{i}$, respectively. We define $\sigma_i \triangleq |\mathcal{O}_{i}|$. Next, we introduce the notion of \textit{root nodes}.

\begin{definition}[Root nodes]
\label{defn:rootnodes}
For each $\lambda_j \in \Lambda_{U}(\mathbf{A})$, the set of nodes that can detect $\lambda_j$ is denoted by $\mathcal{S}_j$, and called the set of \textit{root nodes} for $\lambda_j$.\footnote{Throughout the paper, for the sake of conciseness, we use the terminology `node $i$ can detect eigenvalue $\lambda_j$' to imply that $rank\left[\begin{smallmatrix} \mathbf{A}-\lambda_j\mathbf{I}_n \\ \mathbf{C}_i \end{smallmatrix}\right] = n$.}
\end{definition}

We also recall the definition of a source component of a graph \cite{martins3}.

 \begin{definition} [Source Component] Given a directed graph $\mathcal{G} =(\mathcal{V},\mathcal{E})$, a source component $(\mathcal{V}_s,\mathcal{E}_s)$ is defined as a strongly connected component of $\mathcal{G}$ such that there are no edges from $\mathcal{V} \setminus \mathcal{V}_s$ to $\mathcal{V}_s$.
\end{definition}

Let there be $p$ source components of $\mathcal{G}$, denoted by ${\{(\mathcal{V}_i,\mathcal{E}_i)\}}_{i \in \{1, \cdots , p\}}$. The subsystem associated with the $i$-th source component is given by the pair $(\mathbf{A},\mathbf{C}_{\mathcal{V}_i})$. For the subsequent development, it should be noted that by a system $(\mathbf{A,C})$, we refer to the matrix $\mathbf{A}$ in equation  (\ref{eqn:plant}), and the matrix $\mathbf{C}={\begin{bmatrix}\mathbf{C}^T_{1} & \cdots & \mathbf{C}^T_{N}\end{bmatrix}}^T$ containing each of the measurement matrices given by (\ref{eqn:Obsmodel}). Then, we classify systems and graphs based on the following two conditions.

\begin{condition}
A system $(\mathbf{A,C})$ and graph $\mathcal{G}$ are said to satisfy Condition 1 if the sub-system associated with every source component is detectable, i.e., the pair $(\mathbf{A},\mathbf{C}_{\mathcal{V}_i})$ is detectable $\forall i  \in \{1, \cdots , p\}$.
\end{condition}

\begin{condition}
A system $(\mathbf{A,C})$ and graph $\mathcal{G}$ are said to satisfy Condition 2 if for each unstable or marginally stable eigenvalue of the plant, there exists at least one root node within each source component, i.e., for all $i \in \{1, \cdots , p\}$ and all $\lambda_j \in \Lambda_{U}(\mathbf{A})$, there exists $l \in \mathcal{V}_i$, such that $rank\begin{bmatrix} \mathbf{A}-\lambda_j\mathbf{I}_n \\ \mathbf{C}_l \end{bmatrix} = n$.\footnote{Note that given a source component, Condition $2$ does not necessarily imply the existence of a single node within such a component that can simultaneously detect all the unstable and marginally stable eigenvalues of the system via its own measurements.}
\end{condition}

\begin{remark}
\label{remark:Cond}
It is trivial to see that if a system $(\mathbf{A,C})$ and graph $\mathcal{G}$ satisfy Condition 2, they also satisfy Condition 1. To see that the converse is not true in general, 
 consider the $3$-node network $\mathcal{G}$ in Figure \ref{fig:example}, and the following model:
\begin{equation}
\mathbf{A}=\begin{bmatrix} 2 & 0\\ 0 & 2 \end{bmatrix}, \mathbf{C}_1= \begin{bmatrix} 1 & 0 \end{bmatrix}, \mathbf{C}_2= \begin{bmatrix} 0 & 1 \end{bmatrix}, \mathbf{C}_3 = \begin{bmatrix} 1 & 0\\ 0 & 1 \end{bmatrix}.
\label{eqn:examplesystem}
\end{equation}
From Figure \ref{fig:example}, we see that the network has two source components, namely,  the strong component formed by nodes $1$ and $2$ ($S_1$), and the isolated node $3$ ($S_2$). Clearly, each of the pairs $(\mathbf{A,C}_3)$ and $(\mathbf{A},\begin{bmatrix} \mathbf{C}_1 \\ \mathbf{C}_2 \end{bmatrix})$ are detectable. Thus, the system is detectable from each of the two source components. It follows that this system and graph satisfy Condition 1. However, neither node $1$ nor node $2$ can detect the eigenvalue $\lambda=2$ based on just their own measurements, i.e., there does not exist a root node for $\lambda=2$ within source component $S_1$. Thus, this system and graph do not satisfy Condition 2.
\end{remark}

\begin{figure}[t]
\begin{center}
%\hspace*{-1cm}
\begin{tikzpicture}
[->,shorten >=1pt,scale=.5, minimum size=10pt, auto=center, node distance=3cm,
  thick, every node/.style={circle, draw=black, thick},]
\tikzstyle{block} = [rectangle, draw, fill=green!20, 
    text width=5em, text centered, rounded corners, minimum height=4em];
    \node [block]  at (3,4) (plant) {LTI plant};
\node [circle, draw, fill=blue!20](n1)  at (0,0)  (1) {1};
\node [circle, draw, fill=blue!20](n2) at (3,0)   (2)  {2};
\node [circle, draw, fill=blue!20](n3) at (6,-0)  (3)  {3};

\path[every node/.style={font=\sffamily\small}]
(plant)
             edge[red, bend right] node [right] {$y_1[k]$} (1)
             edge[red] node [right] {$y_2[k]$} (2)
             edge[red, bend left] node [right] {$y_3[k]$} (3)
                       
    (1)
        edge [bend right] node [] {} (2)
        
    (2)
        edge [bend right] node [] {} (1);

\end{tikzpicture}
\end{center}
\caption{Example for illustrating Remark \ref{remark:Cond}.}
\label{fig:example}
\end{figure}
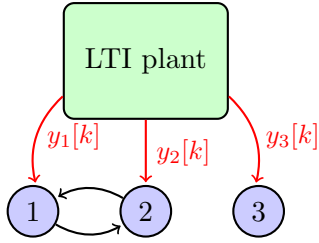

In \cite{martins3}, the authors identified that a distributed observer cannot be constructed (regardless of the state update or exchange rules) if the system $(\mathbf{A,C})$ and graph $\mathcal{G}$ do not satisfy Condition 1. They then designed a distributed observer for the class of systems and graphs satisfying Condition 1 by constructing augmented state observers (i.e., observers of dimension larger than that of the system) drawing upon connections to decentralized control theory. Here, we present an alternate and more direct design approach, and in the process, establish that it is possible to design a distributed observer \textit{without state augmentation} for this (most general) class of systems and graphs.\footnote{The exact structure of our distributed observer  presented in Section \ref{sec:compact} illustrates that the dimension of the internal state/estimate $\hat{\mathbf{x}}_i[k]$ maintained by a given node $i$ is equal to the dimension of the state $\mathbf{x}[k]$.} Before we delve into the specifics of the distributed observer design for systems and graphs satisfying Condition 1, we present a simple motivating example which serves to build intuition for the more complicated scenarios.\footnote{At this point, it is worth mentioning that although the distributed observer that we shall design for systems and graphs satisfying Condition 1 will also work for systems and graphs satisfying Condition 2, we will later propose an alternate scheme with various implementation benefits for the latter class of systems and graphs.}

\section{Illustrative Example}
\label{illustration}
\begin{figure}[t]
\begin{center}
%\hspace*{-1cm}
\begin{tikzpicture}
[->,shorten >=1pt,scale=.5, minimum size=10pt, auto=center, node distance=3cm,
  thick, every node/.style={circle, draw=black, thick},]
\tikzstyle{block} = [rectangle, draw, fill=green!20, 
    text width=5em, text centered, rounded corners, minimum height=4em];
\node [block]  at (0,3.5) (plant) {LTI plant};

\node [circle, draw, fill=red!50](n1) at (0,0)  (1)  {1};
\node [circle, draw, fill=blue!20](n2) at (-2,-2)   (2)  {2};
\node [circle, draw, fill=blue!20](n3) at (2,-2)   (3)  {3};

\path[every node/.style={font=\sffamily\small}]
(plant)
             edge[red] node [right] {$y_1[k]$} (1)
             
    (1)
        edge [bend right] node [] {} (2)
        edge [bend left]  node [] {} (3)
    (2)
        edge [bend right] node [] {} (1);
        
\node [block]  at (8,3.5) (plant2) {LTI plant};        
        
\node [circle, draw, fill=red!50](n1) at (8,0)     (4)  {1};
\node [circle, draw, fill=blue!20](n2) at (6,-2)   (5)  {2};
\node [circle, draw, fill=blue!20](n3) at (10,-2)   (6)  {3};

\path[every node/.style={font=\sffamily\small}]
(plant2)
             edge[red] node [right] {$y_1[k]$} (4)

    (4)
        edge [bend right] node [] {} (5)
        edge [bend left]  node [] {} (6);
       
\end{tikzpicture}
\end{center}
\caption{The graph on the left is the actual network. The graph on the right is a DAG constructed from the original graph.}
\label{fig:DAG}
\end{figure}
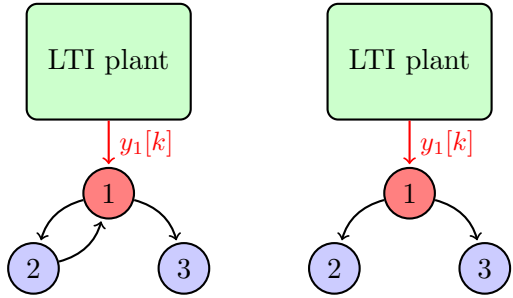

Consider a scalar unstable plant with dynamics given by $x[k+1]=1.5x[k]$. The plant is monitored by a network of nodes, as depicted by Figure \ref{fig:DAG}. Node $1$ has a measurement given by $y_1[k]=x[k]$, whereas nodes $2$ and $3$ have no measurements. Given this plant and network model, we wish to design a distributed observer. The commonly adopted approach in the literature is to develop a consensus-based state estimate update rule for each node in the network \cite{martins1,martins2,martins3,Khanobs1,Khanobs2}. Here, we make the following observation: since node 1 can detect the eigenvalue $\lambda=1.5$ of the plant based on its own measurements, it can run a Luenberger observer for estimating $x[k]$, without requiring information from its neighbors. Specifically, the following Luenberger observer allows node $1$ to estimate and predict the state:
\begin{equation}
\hat{x}_{1}[k+1]=1.5\hat{x}_{1}[k]+1.5(y_1[k]-\hat{x}_{1}[k])=1.5y_1[k].
\label{eqn:motivatingex}
\end{equation}
Here, $\hat{x}_{1}[k]$ is the estimate of $x[k]$ maintained by node $1$ at time-step $k$. Now suppose  nodes $2$ and $3$ update their respective estimates of $x[k]$ as follows: $\hat{x}_{i}[k+1]=1.5\hat{x}_{1}[k], i=2,3$. Since $\lim_{k\to\infty}|\hat{x}_{1}[k]-x[k]|=0$ based on the Luenberger observer dynamics given by (\ref{eqn:motivatingex}), it is easy to see that the estimates of nodes 2 and 3 also converge to the true state $x[k]$. This simple example illustrates the following key observations. (i) It is not necessary for every node in the network to run consensus dynamics for estimating the state. More generally, a node needs to run consensus for estimating only the portion of the state vector that is not locally detectable. The rest of the state can be estimated via appropriately designed Luenberger observers. (ii) An inspection of the observable subspace of each node guides the decision of participating (or not participating) in consensus for the example we considered. For more general system and measurement matrices, we shall rely on appropriate similarity transformations which shall reveal what a node can or cannot observe. (iii) Although node $1$ is in a position to receive information from node $2$, it chooses not to listen to any of its neighbors. This pattern of information flow results in a special Directed Acyclic Graph (DAG) of the original network, rooted at node 1. In the DAG constructed in the illustrative example, node $1$ can be viewed as the \textit{source} of information for the state $x[k]$, and the DAG structure can be viewed as the medium for transmitting information from the source to the rest of the network, without corrupting the source itself (this is achieved in this example by ignoring the edge from node $2$ to node $1$). Under this approach, note that every node maintains an observer of dimension $1$, which is equal to the dimension of the state (i.e., there is no state augmentation). Based on these observations, we are now ready to extend the ideas conveyed by this simple example for tackling more general systems and networks.\footnote{Notice that the original network in this illustrative example has only one source component comprised of the nodes $1$ and $2$, and node $1$ is a root node for $\lambda=1.5$ (node $1$ can detect $\lambda=1.5$). Thus, the system and graph illustrated in this example satisfy Condition 2, and hence also Condition 1.} 

\section{Estimation Scheme for systems and graphs satisfying Condition $1$}
\label{Estimation}
In this section, we develop a distributed observer for systems and graphs satisfying Condition 1. For presenting the key ideas while reducing notational complexity, we shall make the following assumption.

\begin{assumption}
The graph $\mathcal{G}$ is strongly connected, i.e., there exists a directed path from any node $i$ to any other node $j$, where $i,j \in \mathcal{V}$.
\label{assump:graph}
\end{assumption}

Later, we shall argue that the development can be easily extended to any general directed network. For now, it suffices to say that any directed graph can be decomposed into strong components, some of which are source components (strong components with no incoming edges from the rest of the network); the strategy that we develop here for a strongly connected graph will be employed within each source component. 

\begin{remark}
Since we are focusing on systems and graphs satisfying Condition 1, it follows that under Assumption \ref{assump:graph}, the pair $(\mathbf{A,C})$ is detectable (as a strongly connected graph is one single source component).
\end{remark}

\begin{remark} Note that under Condition $1$ with a strongly connected graph, one might consider the possibility of aggregating all the sensor measurements at a central node and constructing a centralized Luenberger observer, leveraging the fact that $(\mathbf{A,C})$ is detectable. However, for large networks, the routing of measurement information to and from such a central node via multiple hops would induce delays. A distributed approach (such as the one considered in this paper) alleviates such a difficulty. Additionally, as we discuss in Section \ref{sec:timevar}, the general framework developed in this paper allows for extensions to communication losses and sensor failures, which are typical benefits expected of a distributed algorithm.
\end{remark}

We are now in a position to detail the steps to be followed for designing a distributed observer for systems and graphs satisfying Condition 1. We start by providing a generalization of the Kalman observable canonical form to a setting with multiple sensors.
\begin{figure}[t]
\centering
\label{fig:multisensor}
\scalebox{0.7}{
\begin{tikzpicture}
[->,shorten >=1pt,scale=.5, minimum size=10pt, auto=center, node distance=3cm,
  thick]
\tikzstyle{myarrows}=[line width=1mm,draw=black,,postaction={draw, line width=0.5mm, shorten >=1mm, -}]
\node[] at (0,0) {$
\underbrace{\begin{bmatrix} 1&2&-2&-15\\
0&2&4&-16\\
0&0&3&-3\\
0&0&0&0
\end{bmatrix}}_{\mathbf{A}}
$};
\draw [myarrows](3.5,-3)--(4.5,-3);
\node[] at (3.9,-2.2) {$\mathbf{T}_1$};
\node[] at (11.4,0) {$
\underbrace{\left[
\begin{array}{c|ccc} 1&\multicolumn{3}{c}{\mathbf{0}}\\
\hline
-188.18&1.24&6.79&-19.04\\
253.44&0.35&-0.08&6.16\\
76.18&-0.04&-0.78&3.85
\end{array}\right]}_{\bar{\mathbf{A}}_1}
$};
\draw [myarrows](18.5,-3)--(19.5,-3);
\node[] at (18.9,-2.2) {$\mathbf{T}_2$};
\node[] at (26.2,0) {$
\underbrace{\left[
\begin{array}{c|c|cc} 1&\multicolumn{3}{c}{\mathbf{0}}\\
\hline
-81.42&2&\multicolumn{2}{c}{\mathbf{0}}\\
\cline{2-4}
262.1&-14.86&-7.96&12.48\\
84.34&-10.01&-6.99&10.96
\end{array}\right]}_{\bar{\mathbf{A}}_2}
$};
\draw [myarrows](33.2,-3)--(34.2,-3);
\node[] at (33.6,-2.2) {$\mathbf{T}_3$};
\node[] at (40.5,0) {$
\underbrace{\left[
\begin{array}{c|c|c|c} 1&\multicolumn{3}{c}{\mathbf{0}}\\
\hline
-81.42&2&\multicolumn{2}{c}{\mathbf{0}}\\
\cline{2-4}
10.84&0.07&3&\multicolumn{1}{c}{0}\\
\cline{3-4}
266.34&-17.91&-125.33&0
\end{array}\right]}_{\bar{\mathbf{A}}=\bar{\mathbf{A}}_3}
$};
\node[] at (0,-6.7) {$
\underbrace{\begin{bmatrix} 7&-14&35&14\\
0&2&-8&-4\\
0&0&5&-5\\
\end{bmatrix}}_{\mathbf{C}={\begin{bmatrix}\mathbf{C}^T_{1} & \mathbf{C}^T_{2} & \mathbf{C}^T_{3}\end{bmatrix}}^T}
$};
\node[] at (11.4,-6.7) {$
\underbrace{\left[
\begin{array}{c|ccc} 1666&\multicolumn{1}{c}{0}&0&0\\
\end{array}\right]}_{\bar{\mathbf{C}}_1}
$};
\node[] at (26.2,-6.7) {$
\underbrace{\left[
\begin{array}{cc|cc} -364&4.47&\multicolumn{1}{c}{0}&0\\
\end{array}\right]}_{\bar{\mathbf{C}}_2}
$};
\node[] at (40.5,-6.7) {$
\underbrace{\left[
\begin{array}{cccc} 1666&\multicolumn{1}{|c}{0}&0&0\\
\hline
 -364&4.47&\multicolumn{1}{|c}{0}&0\\
 \hline
105&2.94&41.45&\multicolumn{1}{|c}{0}\\
\end{array}\right]}_{\bar{\mathbf{C}}={\begin{bmatrix}\bar{\mathbf{C}}^T_{1} & \bar{\mathbf{C}}^T_{2} & \bar{\mathbf{C}}^T_{3}\end{bmatrix}}^T} 
$};
\node[] at (5,-12.9) {$
\mathbf{T}_1=\begin{bmatrix} 7&0.3430&-0.8575&-0.3430\\
-14&0.8996&0.2511&0.1004\\
35&0.2511&0.3723&-0.2511\\
14&0.1004&-0.2511&0.8996
\end{bmatrix},
$};
\node[] at (20.4,-12.9) {$
\mathbf{T}_2=\begin{bmatrix} 1&0&0&0\\
0&-0.611&-0.6964&-0.6569\\
0&-1.4724&0.6238&-0.3549\\
0&-1.3890&-0.3549&0.6652
\end{bmatrix},
$};
\node[] at (34,-12.9) {$
\mathbf{T}_3=\begin{bmatrix} 1&0&0&0\\
0&1&0&0\\
0&0&3.4616&0.8431\\
0&0&-5.4281&0.5377
\end{bmatrix}
$};
\end{tikzpicture}
}
\centering\mycaption{Illustration of the Multi-Sensor Observable Canonical Decomposition for a detectable pair $\mathbf{(A,C)}$.}
\end{figure}
\subsection{Multi-Sensor Observable Canonical Decomposition}
\label{Transformations}
Given a system matrix $\mathbf{A}$ and a set of $N$ sensors where the $i$-th sensor has an observation matrix given by $\mathbf{C}_i$, we introduce the notion of a \textit{multi-sensor observable canonical decomposition} in this section. The basic philosophy underlying such a decomposition is as follows: given a list of indexed sensors,  perform an observable canonical decomposition with respect to the first sensor. Then, identify the observable portion of the state space with respect to sensor 2 \textit{within the unobservable subspace of sensor 1}, and repeat the process until  the last sensor is reached. Thus, one needs to perform $N$ observable canonical decompositions, one for each sensor, with the last decomposition revealing the portions of the state space that can and cannot be observed using the cumulative measurements of all the sensors. The details of the multi-sensor observable canonical decomposition are captured by the proof of the following result (given in Appendix \ref{sec:proof_multisensor}).
\begin{proposition}
\label{transformations}
Given a system matrix $\mathbf{A}$, and a set of $N$ sensor observation matrices $\mathbf{C}_1, \mathbf{C}_2, \cdots, \mathbf{C}_{N}$, define $\mathbf{C} \triangleq {\begin{bmatrix}\mathbf{C}^T_{1} & \cdots & \mathbf{C}^T_{N}\end{bmatrix}}^T$. Then, there exists a similarity transformation matrix $\mathcal{T}$ which transforms the pair $\mathbf{(A,C)}$ to $(\bar{\mathbf{A}},\bar{\mathbf{C}})$, such that
\begin{equation}
\resizebox{0.6\hsize}{!}{$
\begin{split}
\bar{\mathbf{A}} &= \left[
\begin{array}{c|c|ccc}
\mathbf{A}_{11}  & \multicolumn{4}{c}{\mathbf{0}} \\
\hline
\mathbf{A}_{21}
 & 
\mathbf{A}_{22} & \multicolumn{3}{c}{\mathbf{0}} \\
\cline{2-5}
\vdots & \vdots & \hspace{-5mm} \ddots  & \vdots & \vdots\\
\mathbf{A}_{(N-1)1}& \mathbf{A}_{(N-1)2} \cdots &  \mathbf{A}_{(N-1)(N-1)} & \multicolumn{2}{|c}{\mathbf{0}}\\
\cline{3-5}
\mathbf{A}_{N1}& \mathbf{A}_{N2} \hspace{2mm}\cdots & \mathbf{A}_{N(N-1)} & \multicolumn{1}{|c}{\mathbf{A}_{NN}} & \multicolumn{1}{|c}{\mathbf{0}}\\
\cline{4-5}
\mathbf{A}_1& \hspace{-2mm}\mathbf{A}_2 \hspace{2mm} \cdots & \mathbf{A}_{(N-1)} & \multicolumn{1}{|c}{\mathbf{A}_{N}} & \multicolumn{1}{|c}{\mathbf{A}_{\mathcal{U}}}\\
\end{array}
\right], \\~\\
\bar{\mathbf{C}} &= \begin{bmatrix} \bar{\mathbf{C}}_1 \\ \bar{\mathbf{C}}_2 \vspace{-1mm} \\  \vdots \\ \bar{\mathbf{C}}_N \end{bmatrix} = \left[ \begin{array}{ccccc} \mathbf{C}_{{11}} & \multicolumn{4}{|c}{\mathbf{0}}\\
\hline
\mathbf{C}_{{21}} & \multicolumn{1}{c}{\mathbf{C}_{{22}}} & \multicolumn{3}{|c}{\mathbf{0}}\\
\hline
\vdots&\vdots&\vdots&\vdots&\vdots\\
\mathbf{C}_{{N1}} & \multicolumn{1}{c}{\mathbf{C}_{{N2}}}  & \cdots  \mathbf{C}_{N(N-1)} & \mathbf{C}_{NN}  & \multicolumn{1}{|c}{\mathbf{0}}\\
 \end{array}
 \right].
\end{split}
$}
\label{eqn:gen_form}
\end{equation}
Furthermore, the following properties hold: (i) the pair $(\mathbf{A}_{ii},\mathbf{C}_{ii})$ is observable $\forall i \in \{1,2, \cdots, N\}$; and (ii) the matrix $\mathbf{A}_{\mathcal{U}}$ describes the dynamics of the unobservable subspace of the pair $(\mathbf{A,C})$.
\end{proposition}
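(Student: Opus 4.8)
The plan is to argue by induction on the number of sensors $N$, where each inductive step amounts to performing one standard (Kalman) observable canonical decomposition of the restriction of $\mathbf{A}$ to the unobservable subspace accumulated from the previously processed sensors. I would take as the base case ($N=1$) and as the engine of the induction the single-sensor fact: for any pair $(\mathbf{M},\mathbf{D})$ there is an invertible $\mathbf{T}$ with $\mathbf{T}^{-1}\mathbf{M}\mathbf{T}=\begin{bmatrix}\mathbf{M}_o & \mathbf{0}\\ \star & \mathbf{M}_{\bar{o}}\end{bmatrix}$ and $\mathbf{D}\mathbf{T}=\begin{bmatrix}\mathbf{D}_o & \mathbf{0}\end{bmatrix}$, where $(\mathbf{M}_o,\mathbf{D}_o)$ is observable and the trailing coordinates span the $\mathbf{M}$-invariant unobservable subspace $\{v:\mathbf{D}\mathbf{M}^{j}v=0\ \forall j\geq 0\}$.

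For the inductive step I would define the cumulative unobservable subspaces $\mathcal{U}_0=\mathbb{R}^n$ and $\mathcal{U}_k=\mathcal{U}_{k-1}\cap\mathcal{N}_k$, where $\mathcal{N}_k\triangleq\{v:\mathbf{C}_k\mathbf{A}^{j}v=0\ \forall j\geq 0\}$ is the unobservable subspace of $(\mathbf{A},\mathbf{C}_k)$. Each $\mathcal{U}_k$ is $\mathbf{A}$-invariant, being an intersection of $\mathbf{A}$-invariant subspaces. Assuming sensors $1,\dots,k-1$ have been processed so that the first blocks are fixed and the remaining coordinates span $\mathcal{U}_{k-1}$, I apply the single-sensor decomposition to the restricted pair $(\mathbf{A}|_{\mathcal{U}_{k-1}},\mathbf{C}_k|_{\mathcal{U}_{k-1}})$, splitting $\mathcal{U}_{k-1}$ into an observable part (the new diagonal block, with dynamics $\mathbf{A}_{kk}$ and output $\mathbf{C}_{kk}$) and the unobservable part $\mathcal{U}_k$. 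Extending by the identity on the already-processed coordinates yields $\mathbf{T}_k=diag(\mathbf{I},\tilde{\mathbf{T}}_k)$, and the overall similarity is $\mathcal{T}=\mathbf{T}_1\mathbf{T}_2\cdots\mathbf{T}_N$.

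I would then read all four conclusions off the nested flag $\mathbb{R}^n=\mathcal{U}_0\supseteq\mathcal{U}_1\supseteq\cdots\supseteq\mathcal{U}_N$ together with the adapted basis (a basis of a complement of $\mathcal{U}_k$ in $\mathcal{U}_{k-1}$ for block $k$, and a basis of $\mathcal{U}_N$ for the final block). The block lower-triangular shape of $\bar{\mathbf{A}}$, with vanishing blocks strictly above the diagonal, follows because $\mathbf{A}$-invariance of each $\mathcal{U}_k$ sends the span of blocks $k+1,\dots$ into itself. The zero pattern of $\bar{\mathbf{C}}_i$ follows from $\mathcal{U}_i\subseteq\mathcal{N}_i\subseteq\ker\mathbf{C}_i$, so $\mathbf{C}_i$ annihilates every coordinate beyond the $i$-th block. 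Observability of $(\mathbf{A}_{ii},\mathbf{C}_{ii})$ is exactly the observable top-left block delivered by the single-sensor decomposition at step $i$, and is unaffected by the off-diagonal entries below it. Finally, unwinding the recursion gives $\mathcal{U}_N=\bigcap_{i=1}^{N}\mathcal{N}_i$; since $\mathbf{C}\mathbf{A}^{j}v=0$ if and only if $\mathbf{C}_i\mathbf{A}^{j}v=0$ for every $i$, this intersection is precisely the unobservable subspace of $(\mathbf{A},\mathbf{C})$, with $\mathbf{A}_{\mathcal{U}}=\mathbf{A}|_{\mathcal{U}_N}$ its dynamics, giving (ii).

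The delicate point, and what I expect to be the main obstacle, is the bookkeeping that each new $\mathbf{T}_k$ leaves the structure already revealed for sensors $1,\dots,k-1$ intact; this is what the block form $diag(\mathbf{I},\tilde{\mathbf{T}}_k)$ guarantees, but verifying it directly through the matrix products $\mathcal{T}=\mathbf{T}_1\cdots\mathbf{T}_N$ is where the notation becomes heavy. I would therefore bypass the explicit products and instead fix the adapted basis for the flag once and for all, checking the three structural shapes coordinate-freely from $\mathbf{A}$-invariance and the inclusions $\mathcal{U}_i\subseteq\ker\mathbf{C}_i$. The only genuinely substantive identity to nail down is \emph{$\mathcal{U}_k=\mathcal{U}_{k-1}\cap\mathcal{N}_k$}, which holds because a vector of $\mathcal{U}_{k-1}$ is unobservable for $(\mathbf{A}|_{\mathcal{U}_{k-1}},\mathbf{C}_k|_{\mathcal{U}_{k-1}})$ if and only if $\mathbf{C}_k\mathbf{A}^{j}v=0$ for all $j$, using that $\mathcal{U}_{k-1}$ is $\mathbf{A}$-invariant so that $(\mathbf{A}|_{\mathcal{U}_{k-1}})^{j}v=\mathbf{A}^{j}v$.
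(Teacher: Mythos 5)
Your proposal is correct and is essentially the paper's own proof: the paper likewise performs sequential single-sensor observable canonical decompositions, where step $k$ decomposes the pair $(\mathbf{A}_{(k-1)\mathcal{U}},\mathbf{C}_{k(k-1)\mathcal{U}})$ --- exactly your restricted pair $(\mathbf{A}|_{\mathcal{U}_{k-1}},\mathbf{C}_k|_{\mathcal{U}_{k-1}})$ --- extends by the identity as $\mathbf{T}_k=diag(\mathbf{I},\bar{\mathbf{T}}_k)$, and sets $\mathcal{T}=\mathbf{T}_1\cdots\mathbf{T}_N$. Your coordinate-free verification via the flag $\mathcal{U}_0\supseteq\mathcal{U}_1\supseteq\cdots\supseteq\mathcal{U}_N$ is merely a cleaner packaging of the matrix bookkeeping the paper carries out explicitly.
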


Figure 3 illustrates the steps of the multi-sensor observable canonical decomposition for a sensor network with $3$ nodes. The first step involves an observable canonical decomposition of the pair ($\mathbf{A,C}_1$) via the matrix $\mathbf{T}_1$. Next, $\mathbf{T}_2$ reveals the portion of the unobservable subspace of ($\mathbf{A,C}_1$) that can be observed using the observation matrix $\mathbf{C}_2$. Finally, $\mathbf{T}_3$ reveals the portion of the unobservable subspace of ($\mathbf{A},\begin{bmatrix}\mathbf{C}_1\\\mathbf{C}_2\end{bmatrix}$) that can be observed using the observation matrix $\mathbf{C}_3$. For this example, we have $\mathcal{T}=\mathbf{T}_1\mathbf{T}_2\mathbf{T}_3$. In the following section, we discuss how the multi-sensor observable canonical decomposition is applicable to the problem of designing a distributed observer for systems and graphs satisfying Condition 1.

\begin{remark}Note that while describing the multi-sensor observable canonical decomposition, we did not specify any rule for indexing the sensors. This is precisely  because the technique we propose solves Problem 1 regardless of the way the sensors are indexed, as long as the system and graph satisfy Condition 1. However, the question of appropriately ordering the sensors (or including redundancy) will become important when dealing with stochastic systems  or with sensor failures. We discuss these issues later in the paper.
\end{remark}
\subsection{Observer Design}
\label{Observer}
 Using the matrix $\mathcal{T}$ identified in Proposition \ref{transformations}, we perform the coordinate transformation $\mathbf{x}[k]=\mathcal{T}\mathbf{z}[k]$ to obtain
\begin{equation}
\label{eqn:transformed_dyn}
\begin{split}
\mathbf{z}[k+1]&=\bar{\mathbf{A}}\mathbf{z}[k],\\
\mathbf{y}_i[k]&=\bar{\mathbf{C}}_i\mathbf{z}[k], \quad \forall i \in \{1, \cdots, N\},
\end{split}
\end{equation}
where $\bar{\mathbf{A}}={\mathcal{T}}^{-1}\mathbf{A}\mathcal{T}$  and $\bar{\mathbf{C}}_i = \mathbf{C}_i\mathcal{T} = \begin{bmatrix} \mathbf{C}_{i1} & \mathbf{C}_{i2} & \cdots & \mathbf{C}_{i(i-1)} & \mathbf{C}_{ii}  & \multicolumn{1}{|c}{\mathbf{0}}\end{bmatrix}$ are given by (\ref{eqn:gen_form}). The vector $\mathbf{z}[k]$ assumes the following structure (commensurate with the structure of $\bar{\mathbf{A}}$ in (\ref{eqn:gen_form})):
\begin{equation}
\mathbf{z}[k]={\begin{bmatrix}
{\mathbf{z}^{(1)}_{}[k]}^{T}&
\cdots&
{\mathbf{z}^{(N)}_{}[k]}^{T}&
{\mathbf{z}_{\mathcal{U}}[k]}^{T}
\end{bmatrix}}^{T}.
\label{eqn:substates}
\end{equation}
Here, $\mathbf{z}_{\mathcal{U}}[k]$ is precisely the unobservable portion of the state $\mathbf{z}[k]$, with respect to the pair $(\mathbf{A,C})$. We call $\mathbf{z}^{(j)}_{}[k] \in {\mathbb{R}}^{o_j}$ the $j$-th sub-state, and $\mathbf{z}_{\mathcal{U}}[k]$ the unobservable sub-state. Notice that based on the multi-sensor observable canonical decomposition, there is a one-to-one correspondence between a node $j$ and its associated sub-state $\mathbf{z}^{(j)}_{}[k]$. Accordingly, node $j$ is viewed as the source of information of its corresponding sub-state $\mathbf{z}^{(j)}_{}[k]$, and is tasked with the responsibility of estimating this sub-state. For each of the $N$ sub-states, we thus have a unique source of information (based on the initial labeling of the nodes). However, there is no unique source of information for the unobservable sub-state $\mathbf{z}_{\mathcal{U}}[k]$, as this portion of the state does not correspond to the observable subspace of any of the nodes in the network. Each node will thus maintain an estimate of $\mathbf{z}_{\mathcal{U}}[k]$, which it updates as a linear function of its \textit{own} estimates of each of the $N$ sub-states $\mathbf{z}^{(j)}_{}[k], \forall j \in \{1,2, \cdots, N\}$.

\begin{remark} 
It should be noted that a given sub-state $\mathbf{z}^{(j)}_{}[k]$ in equation (\ref{eqn:substates}) might be of zero dimension (i.e., the sub-state can be empty). For instance, this can happen if its corresponding source of information, namely node $j$, has no measurements, i.e., if $\mathbf{C}_j=\mathbf{0}$. 
\end{remark}

First, based on equations (\ref{eqn:gen_form}), (\ref{eqn:transformed_dyn}) and (\ref{eqn:substates}), we observe that the dynamics of the $i$-th sub-state are governed by the equations
\begin{equation}
\begin{split}
\mathbf{z}^{(i)}_{}[k+1]=\mathbf{A}_{ii}\mathbf{z}^{(i)}_{}[k]+\sum_{j=1}^{i-1}\mathbf{A}_{ij}\mathbf{z}^{(j)}_{}[k],\\
\mathbf{y}_i[k]=\mathbf{C}_{ii}\mathbf{z}^{(i)}_{}[k]+\sum_{j=1}^{i-1}\mathbf{C}_{ij}\mathbf{z}^{(j)}_{}[k].
\end{split}
\label{eqn:gendyn}
\end{equation}
The reader is referred to the proof of Proposition \ref{transformations} in Appendix \ref{sec:proof_multisensor} for a mathematical description of the matrices appearing in \eqref{eqn:gendyn}.
 Note that the unobservable sub-state $\mathbf{z}_{\mathcal{U}}[k]$ is governed by the dynamics
\begin{equation}
\mathbf{z}_{\mathcal{U}}[k+1]=\mathbf{A}_{\mathcal{U}}\mathbf{z}_{\mathcal{U}}[k]+\sum_{j=1}^{N}\mathbf{A}_{j}\mathbf{z}^{(j)}_{}[k],
\label{eqn:unobs_state}
\end{equation}
where the matrices $\mathbf{A}_{j}$ describe the coupling that exists between the unobservable sub-state $\mathbf{z}_{\mathcal{U}}[k]$ and each of the $N$ sub-states $\mathbf{z}^{(j)}_{}[k]$. Define $\hat{\mathbf{z}}^{(j)}_{i}[k]$ as the estimate of the $j$-th sub-state maintained by the $i$-th node. The estimation policy adopted by the $i$-th node is as follows - it uses a Luenberger-style update rule for updating its associated sub-state estimate $\hat{\mathbf{z}}^{(i)}_{i}[k]$, and a consensus based scheme for updating its estimates of all other sub-states $\hat{\mathbf{z}}^{(j)}_{i}[k]$, where $ j \in \{1, \cdots ,N\} \setminus\{i\}.$ Based on the dynamics (\ref{eqn:gendyn}), the Luenberger observer at node $i$ is constructed as 
\begin{equation}
\begin{split}
\hat{\mathbf{z}}^{(i)}_{i}[k+1]&=\mathbf{A}_{ii}\hat{\mathbf{z}}^{(i)}_{i}[k]+\sum_{j=1}^{i-1}\mathbf{A}_{ij}\hat{\mathbf{z}}^{(j)}_{i}[k]\\
&\hspace{2mm}+\mathbf{L}_i\left(\mathbf{y}_i[k]-\left(\mathbf{C}_{ii}\hat{\mathbf{z}}^{(i)}_{i}[k]+\sum_{j=1}^{i-1}\mathbf{C}_{ij}\hat{\mathbf{z}}^{(j)}_{i}[k]\right)\right),\\
\end{split}
\label{eqn:luen}
\end{equation}
where $\mathbf{L}_i \in {\mathbb{R}}^{o_i \times r_i}$ is a gain matrix which needs to be designed.
For estimation of the $j$-th sub-state, where $j \in \{1, \cdots ,N\}\setminus\{i\}$, the $i$-th node again mimics the first equation in (\ref{eqn:gendyn}), but this time relies on consensus dynamics of the form
\begin{equation}
\hat{\mathbf{z}}^{(j)}_{i}[k+1]=\underbrace{\mathbf{A}_{jj}\sum_{l\in\mathcal{N}_i}w^{j}_{il}\hat{\mathbf{z}}^{(j)}_{l}[k]}_{\hbox{consensus term}}+\underbrace{\sum_{l=1}^{j-1}\mathbf{A}_{jl}\hat{\mathbf{z}}^{(l)}_{i}[k]}_{\hbox{coupling term}},
\label{eqn:consensus}
\end{equation}
where $w^{j}_{il}$ is the weight the $i$-th node associates with the $l$-th node, for the estimation of the $j$-th sub-state. The weights are non-negative and satisfy
\begin{equation}
\sum_{l\in\mathcal{N}_i}w^{j}_{il}=1, \quad \forall j \in \{1, \cdots ,N\}\setminus\{i\}.
\label{eqn:stochasticity}
\end{equation}
In equation (\ref{eqn:consensus}), the first term is a standard consensus term, while the second term has been introduced specifically to account for the coupling that exists between a given sub-state $j$ and sub-states $1$ to $j-1$ (as given by (\ref{eqn:gendyn})). Let $\hat{\mathbf{z}}_{i\mathcal{U}}[k]$ denote the estimate of the unobservable sub-state $\mathbf{z}_{\mathcal{U}}[k]$ maintained by the $i$-th node. Mimicking equation (\ref{eqn:unobs_state}), each node $i$ uses the following rule to update $\hat{\mathbf{z}}_{i\mathcal{U}}[k]$:
\begin{equation}
\hat{\mathbf{z}}_{i\mathcal{U}}[k+1]=\mathbf{A}_{\mathcal{U}}\hat{\mathbf{z}}_{i\mathcal{U}}[k]+\sum_{j=1}^{N}\mathbf{A}_{j}\hat{\mathbf{z}}^{(j)}_{i}[k].
\label{eqn:unobsestimate}
\end{equation}
In summary, equations (\ref{eqn:luen}), (\ref{eqn:consensus}) and (\ref{eqn:unobsestimate}) together form the observer for the state $\mathbf{z}[k]={\mathcal{T}}^{-1}\mathbf{x}[k]$ maintained by each node $i$.

\subsection{Error Dynamics at the $i$-th Node}
Define $\mathbf{e}^{(j)}_{i}[k] \triangleq \hat{\mathbf{z}}^{(j)}_{i}[k] - \mathbf{z}^{(j)}_{}[k]$ as the error in estimation of the $j$-th sub-state by the $i$-th node. Using equations (\ref{eqn:gendyn}) and (\ref{eqn:luen}), we obtain the error in the Luenberger observer dynamics at the $i$-th node as 
\begin{equation}
\mathbf{e}^{(i)}_{i}[k+1]=\left(\mathbf{A}_{ii}-\mathbf{L}_i\mathbf{C}_{ii}\right)\mathbf{e}^{(i)}_{i}[k]+\sum_{j=1}^{i-1}\left(\mathbf{A}_{ij}-\mathbf{L}_i\mathbf{C}_{ij}\right)\mathbf{e}^{(j)}_{i}[k].
\label{eqn:errluen}
\end{equation}
Similarly, noting that $\mathbf{A}_{jj} = \mathbf{A}_{jj}\sum_{l\in\mathcal{N}_i}w^{j}_{il}$ (based on equation (\ref{eqn:stochasticity})), and using equations (\ref{eqn:gendyn}) and (\ref{eqn:consensus}), we obtain the following consensus error dynamics at node $i$, $\forall j \in \{1, \cdots ,N\}\setminus\{i\}$:
\begin{equation}
\mathbf{e}^{(j)}_{i}[k+1] = \mathbf{A}_{jj}\sum_{l\in \mathcal{N}_i}w^{j}_{il}\mathbf{e}^{(j)}_{l}[k]+\sum_{l=1}^{j-1}\mathbf{A}_{jl}\mathbf{e}^{(l)}_{i}[k].
\label{eqn:errconsensus}
\end{equation}
Define $\mathbf{e}_{i\mathcal{U}}[k] \triangleq \hat{\mathbf{z}}_{i\mathcal{U}}[k]-\mathbf{z}_{\mathcal{U}}[k]$ as the error in estimation of the unobservable sub-state $\mathbf{z}_{\mathcal{U}}[k]$ by the $i$-th node. Using (\ref{eqn:unobs_state}) and (\ref{eqn:unobsestimate}), we obtain the following error dynamics for the unobservable sub-state at node $i$:
\begin{equation}
\mathbf{e}_{i\mathcal{U}}[k+1] = \mathbf{A}_{\mathcal{U}}\mathbf{e}_{i\mathcal{U}}[k]+\sum_{j=1}^{N}\mathbf{A}_{j}\mathbf{e}^{(j)}_{i\mathcal{}}[k].
\label{eqn:errorunobs}
\end{equation}

\subsection{Analysis of the Estimation Scheme for Systems and Graphs Satisfying Condition $1$}
In this section, we present our main result, formally stated as follows.
\begin{theorem}
\label{GEPthm}
Consider a system $(\mathbf{A,C})$ and graph $\mathcal{G}$ satisfying Condition 1. Let Assumption \ref{assump:graph} hold true. Then, for each node $i \in \{1,2, \cdots, N\}$, there exists a choice of observer gain matrix $\mathbf{L}_i$, and consensus weights $w^{j}_{il}$, $j \in \{1,2, \cdots, N\}\setminus\{i\}$, $l \in \mathcal{N}_i$, such that the update rules given by equations (\ref{eqn:luen}), (\ref{eqn:consensus}), and (\ref{eqn:unobsestimate}) form a distributed observer.
\end{theorem}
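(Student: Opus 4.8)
The plan is to work entirely in the transformed coordinates $\mathbf{z}[k]=\mathcal{T}^{-1}\mathbf{x}[k]$ supplied by Proposition \ref{transformations}, and to show that the gains and weights can be chosen so that the \emph{stacked} error vector is governed by a Schur-stable matrix; since $\mathcal{T}$ is a fixed invertible matrix, $\hat{\mathbf{x}}_i[k]\to\mathbf{x}[k]$ then follows immediately from $\hat{\mathbf{z}}_i[k]\to\mathbf{z}[k]$. Collect the per-node errors for the $j$-th sub-state into $\mathbf{E}^{(j)}[k]\triangleq[{\mathbf{e}^{(j)}_1[k]}^T\ \cdots\ {\mathbf{e}^{(j)}_N[k]}^T]^T$ and the unobservable errors into $\mathbf{E}_{\mathcal{U}}[k]$. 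Reading off (\ref{eqn:errluen}), (\ref{eqn:errconsensus}) and (\ref{eqn:errorunobs}), the dynamics of $\mathbf{E}^{(j)}$ involve only $\mathbf{E}^{(1)},\dots,\mathbf{E}^{(j)}$, since the coupling terms $\sum_{l<j}$ reach back only to \emph{lower-indexed} sub-states, while $\mathbf{E}_{\mathcal{U}}$ is driven by $\mathbf{A}_{\mathcal{U}}$ together with all the $\mathbf{E}^{(j)}$. Hence, ordering the global error as $(\mathbf{E}^{(1)},\dots,\mathbf{E}^{(N)},\mathbf{E}_{\mathcal{U}})$, its closed-loop matrix $\mathbf{\Psi}$ is \emph{block lower triangular}, so $\mathbf{\Psi}$ is Schur if and only if each diagonal block is. This reduces the theorem to stabilizing each diagonal block by an independent choice of $\mathbf{L}_j$ and of the weights $w^{j}_{il}$.

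Two of the diagonal-block types are handled by classical arguments. For the unobservable block, each node obeys $\mathbf{e}_{i\mathcal{U}}[k+1]=\mathbf{A}_{\mathcal{U}}\mathbf{e}_{i\mathcal{U}}[k]+\cdots$, so the block is $\mathbf{I}_N\otimes\mathbf{A}_{\mathcal{U}}$; since the graph is a single source component under Assumption \ref{assump:graph}, $(\mathbf{A},\mathbf{C})$ is detectable and therefore $\mathbf{A}_{\mathcal{U}}$, which by property (ii) of Proposition \ref{transformations} carries the unobservable dynamics, is Schur. For the $j$-th sub-state, the diagonal block $\mathbf{\Phi}^{(j)}$ couples the \emph{root} node $j$, whose error (\ref{eqn:errluen}) evolves internally as $(\mathbf{A}_{jj}-\mathbf{L}_j\mathbf{C}_{jj})\mathbf{e}^{(j)}_j[k]$, with every other node $i\neq j$, whose error (\ref{eqn:errconsensus}) evolves internally as $\mathbf{A}_{jj}\sum_{l}w^{j}_{il}\mathbf{e}^{(j)}_l[k]$. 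Observability of $(\mathbf{A}_{jj},\mathbf{C}_{jj})$ (property (i) of Proposition \ref{transformations}) lets me pick $\mathbf{L}_j$ placing $\mathbf{A}_{jj}-\mathbf{L}_j\mathbf{C}_{jj}$ strictly inside the unit disk.

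The crux is the follower part of $\mathbf{\Phi}^{(j)}$, where the scaling matrix $\mathbf{A}_{jj}$ may itself be unstable, so an ordinary stochastic-weight consensus need not be stable. My plan is to sidestep this by choosing the weights to route each follower's estimate strictly \emph{toward} the root. Let $d(i)$ be the length of a shortest directed path from node $j$ to node $i$, which is finite by strong connectivity. Every follower $i\neq j$ with $d(i)=t\geq 1$ has an in-neighbour $v\in\mathcal{N}_i$ with $d(v)=t-1$ lying on such a path; I set $w^{j}_{iv}=1$ and all its other weights to zero, which respects (\ref{eqn:stochasticity}). Ordering the followers by increasing $d(\cdot)$, node $i$ then depends only on a strictly-closer node, so the follower sub-block of $\mathbf{\Phi}^{(j)}$ is strictly block lower triangular, hence nilpotent, with zero diagonal blocks, while the root contributes the single diagonal block $\mathbf{A}_{jj}-\mathbf{L}_j\mathbf{C}_{jj}$. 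Consequently $\rho(\mathbf{\Phi}^{(j)})=\rho(\mathbf{A}_{jj}-\mathbf{L}_j\mathbf{C}_{jj})<1$, \emph{irrespective of the spectrum of} $\mathbf{A}_{jj}$; intuitively, every follower's error is eventually just a fixed power $\mathbf{A}_{jj}^{t}$ of the root's decaying error, delayed by $t$ steps.

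With all diagonal blocks Schur, $\mathbf{\Psi}$ is Schur, so $\mathbf{E}^{(j)}[k]\to\mathbf{0}$ for all $j$ and $\mathbf{E}_{\mathcal{U}}[k]\to\mathbf{0}$; undoing the transformation $\mathcal{T}$ yields $\hat{\mathbf{x}}_i[k]\to\mathbf{x}[k]$ for every $i$, i.e.\ a distributed observer in the sense of Definition \ref{defn:distobs}. I expect the main obstacle to be precisely the follower block: recognizing that the possibly-unstable scaling $\mathbf{A}_{jj}$ can be neutralized by a shortest-path (absorbing) weight assignment rather than by the usual requirement that the consensus matrix be stochastic with a spectral gap. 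A secondary point to handle carefully is the degenerate case in which a sub-state $\mathbf{z}^{(j)}$ is empty because node $j$ has no useful measurement, which simply deletes that block and leaves the triangular argument intact. If one instead insists on strictly positive weights, the same conclusion follows by choosing the follower weights small enough that $\rho(\mathbf{W}^{(j)}_{ff})<1/\rho(\mathbf{A}_{jj})$, using that the follower submatrix of an absorbing, strongly connected chain has spectral radius below one.
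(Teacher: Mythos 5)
Your proposal is correct and follows essentially the same route as the paper's proof: the same transformed coordinates and error equations, a Luenberger gain at the root justified by observability of $(\mathbf{A}_{jj},\mathbf{C}_{jj})$, consensus weights routed along a shortest-path (BFS) spanning tree so that the follower weight block is strictly lower triangular and hence nilpotent, and stability of the resulting block lower-triangular cascade (the paper phrases this cascade step as induction on the sub-state index plus ISS, whereas you stack everything into one global triangular matrix $\mathbf{\Psi}$ --- a cosmetic rather than substantive difference). One caveat: your closing aside about strictly positive weights is not sound in general, because a follower that is not adjacent to any root must place its entire unit row mass inside the follower block, so $\rho(\mathbf{W}^{(j)}_{ff})$ cannot be made arbitrarily small (on a long directed path it approaches $1$), and thus the condition $\rho(\mathbf{W}^{(j)}_{ff})<1/\rho(\mathbf{A}_{jj})$ can be unachievable for sufficiently unstable $\mathbf{A}_{jj}$; fortunately that aside is not needed, since your main argument uses the tree-based $0$/$1$ weights.
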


\begin{proof}
 Consider the composite error in estimation of sub-state $j$ by all of the nodes in $\mathcal{V}$, defined as 
\begin{equation}
\mathbf{E}^{(j)}_{}[k]\triangleq\left[\begin{array}{c}\mathbf{e}^{(j)}_{1}[k]\vspace{2mm}\\
\mathbf{e}^{(j)}_{2}[k]\\
\vdots\\
\mathbf{e}^{(j)}_{N}[k]
\end{array}\right].
\label{eqn:Bigerr}
\end{equation}
We will prove that $\mathbf{E}^{(j)}_{}[k]$ converges to zero asymptotically $\forall j \in \{1, \cdots, N\}$ (recall that there are precisely $N$ nodes in the network, each responsible for estimating a certain sub-state). We prove by induction on $j$. Consider the base case $j=1$, i.e., the estimation of the first sub-state. Let the index set $\{1,k_1, k_2, \cdots, k_{N-1}\}$ represent a topological ordering\footnote{Such an ordering results when a standard Breadth-First Search (BFS) \cite{bondy} algorithm is applied to the graph $\mathcal{G}$, with node $1$ as the root node of the tree. Specifically, the order represents the order in which the nodes are added to the spanning tree when the BFS algorithm is implemented, i.e., node $k_1$ would be added first, followed by node $k_2$ and so on. This ordering naturally leads to a lower triangular adjacency matrix for the constructed spanning tree.} consistent with a spanning tree rooted at node 1 (the source of information for sub-state 1). Note that based on Assumption \ref{assump:graph}, it is always possible to find such a spanning tree. Next, consider the composite error vector
\begin{equation}
\bar{\mathbf{E}}^{(1)}_{}[k]=\left[\begin{array}{c}\mathbf{e}^{(1)}_{1}[k]\vspace{2mm}\\
\mathbf{e}^{(1)}_{k_1}[k]\\
\vdots\\
\mathbf{e}^{(1)}_{k_{N-1}}[k]
\end{array}\right]=\left[\begin{array}{c}\mathbf{e}^{(1)}_{1}[k]\vspace{2mm}\\
{\tilde{\mathbf{E}}}^{(1)}[k]\end{array}\right],
\end{equation}
 where ${\tilde{\mathbf{E}}}^{(1)}[k] \triangleq {\begin{bmatrix} {\mathbf{e}^{(1)}_{k_1}[k]}^{T} \cdots {\mathbf{e}^{(1)}_{k_{N-1}}[k]}^{T} \end{bmatrix}}^{T}.$ Note that $\bar{\mathbf{E}}^{(1)}_{}[k]$ is simply a permutation of the rows of $\mathbf{E}^{(1)}_{}[k]$.
Based on the error dynamics equations given by (\ref{eqn:errluen}) and (\ref{eqn:errconsensus}), we obtain

\[\Scale[0.85]{
\underbrace{\left[\begin{array}{c}\mathbf{e}^{(1)}_{1}[k+1]\vspace{2mm}\\
\tilde{\mathbf{E}}^{(1)}_{}[k+1]\end{array}\right]}_{\bar{\mathbf{E}}^{(1)}_{}[k+1]}=
\underbrace{\left[\begin{array}{cc} (\mathbf{A}_{11}-\mathbf{L}_{1}\mathbf{C}_{11}) & \mathbf{0} \\
\mathbf{W}^{1}_{21} \otimes \mathbf{A}_{11} & \mathbf{W}^{1}_{22} \otimes \mathbf{A}_{11} \end{array}\right]}_{\mathbf{M}_1}\underbrace{\left[\begin{array}{c}\mathbf{e}^{(1)}_{1}[k]\vspace{2mm}\\
\tilde{\mathbf{E}}^{(1)}_{}[k]\end{array}\right]}_{\bar{\mathbf{E}}^{(1)}_{}[k]},
}\]
where the entries of the weight matrix $\mathbf{W}^{1}=\begin{bmatrix}\mathbf{W}^{1}_{21} & \mathbf{W}^{1}_{22}\end{bmatrix}$ are populated by the appropriate weights defined by equation (\ref{eqn:errconsensus}) (note that $\mathbf{W}^{1} \in \mathbb{R}^{(N-1) \times N}$ and $\mathbf{W}^{1}_{21}$ is the first column of $\mathbf{W}^{1}$). Notice that $sp(\mathbf{M}_1)=sp(\mathbf{A}_{11}-\mathbf{L}_{1}\mathbf{C}_{11}) \cup sp(\mathbf{W}^{1}_{22} \otimes \mathbf{A}_{11})$. By construction, the pair $(\mathbf{A}_{11}, \mathbf{C}_{11})$ is observable. Thus, it is always possible to find a gain matrix $\mathbf{L}_1$ such that $(\mathbf{A}_{11}-\mathbf{L}_1\mathbf{C}_{11})$  is Schur stable. Next, we impose the constraint that for the estimation of sub-state 1, non-zero consensus weights are assigned to only the branches of the spanning tree consistent with the ordering $\{1,k_1, k_2, \cdots, k_{N-1}\}$, i.e., a node listens to only its parent in such a tree. In this way, $\mathbf{W}^1_{22}$ becomes lower triangular with eigenvalues equal to zero, without violating the stochasticity condition imposed on $\mathbf{W}^1$ by equation (\ref{eqn:stochasticity}). We conclude that by an appropriate choice of consensus weights, we can achieve $\Lambda_{U}(\mathbf{W}^{1}_{22} \otimes \mathbf{A}_{11}) = \emptyset$ (even if $\Lambda_{U}(\mathbf{A}_{11}) \neq \emptyset)$.\footnote{Here, we use the result that if $\mathbf{A} \in \mathbb{R}^{n \times n}$ and $\mathbf{B} \in \mathbb{R}^{m \times m}$, then the eigenvalues of the Kronecker product $\mathbf{A} \otimes \mathbf{B} \in \mathbb{R}^{mn \times mn}$ are the $mn$ numbers $\lambda_i(\mathbf{A})\lambda_j(\mathbf{B}), (i=1, \cdots, n; j=1, \cdots, m)$ \cite{matrixsurvey}.}

Thus, $\mathbf{M}_1$ can be made Schur stable and hence $\lim_{k\to\infty} \bar{\mathbf{E}}^{(1)}_{}[k]=\mathbf{0}$, implying $\lim_{k\to\infty} \mathbf{E}^{(1)}_{}[k] = \mathbf{0} $ (one is just a permutation of the other). Thus, the base case is proven. Next, suppose that $\mathbf{E}^{(j)}_{}[k] $ converges to zero asymptotically $\forall j \in \{1, \cdots ,p-1\}$, where $1 \leq p-1 \leq N-1$. Consider the following composite error vector for the $p$-th sub-state:
\begin{equation}
\bar{\mathbf{E}}^{(p)}_{}[k]=\left[\begin{array}{c}\mathbf{e}^{(p)}_{p}[k]\vspace{2mm}\\
\mathbf{e}^{(p)}_{m_1}[k]\\
\vdots\\
\mathbf{e}^{(p)}_{m_{N-1}}[k]
\end{array}\right]=\left[\begin{array}{c}\mathbf{e}^{(p)}_{p}[k]\vspace{2mm}\\
\tilde{\mathbf{E}}^{(p)}_{}[k]\end{array}\right],
\end{equation}
where the index set $\{p, m_1, m_2, \cdots, m_{N-1}\}$ represents a topological ordering of the nodes of $\mathcal{V}$ to obtain a spanning tree rooted at node $p$ (the source of information for sub-state $p$), and $\tilde{\mathbf{E}}^{(p)}_{}[k]  \triangleq {\begin{bmatrix} {\mathbf{e}^{(p)}_{m_1}[k]}^{T} \cdots {\mathbf{e}^{(p)}_{m_{N-1}}[k]}^{T} \end{bmatrix}}^{T}.$
From the error dynamics equations given by (\ref{eqn:errluen}) and (\ref{eqn:errconsensus}), we obtain
\begin{equation}
\bar{\mathbf{E}}^{(p)}_{}[k+1]=\mathbf{M}_{p}\bar{\mathbf{E}}^{(p)}_{}[k]+\sum_{l=1}^{p-1} \mathbf{H}_{pl}\bar{\mathbf{E}}^{(pl)}_{}[k],
\end{equation}
where 
\begin{equation}
\mathbf{M}_{p}=\left[\begin{array}{cc} (\mathbf{A}_{pp}-\mathbf{L}_{p}\mathbf{C}_{pp}) & \mathbf{0} \\
\mathbf{W}^{p}_{21} \otimes \mathbf{A}_{pp} & \mathbf{W}^{p}_{22} \otimes \mathbf{A}_{pp} \end{array}\right],
\label{eqn:bigerrmatrix}
\end{equation}
\begin{equation}
\mathbf{H}_{pl}=diag\left(\mathbf{A}_{pl}-\mathbf{L}_{p}\mathbf{C}_{pl},\mathbf{I}_{N-1}\otimes\mathbf{A}_{pl}\right),
\end{equation}
\begin{equation}
\bar{\mathbf{E}}^{(pl)}_{}[k]=\left[\begin{array}{c}\mathbf{e}^{(l)}_{p}[k]\vspace{2mm}\\
\mathbf{e}^{(l)}_{m_1}[k]\\
\vdots\\
\mathbf{e}^{(l)}_{m_{N-1}}[k]
\end{array}\right].
\vspace{4mm}
\end{equation}
By following the same train of logic as the base case, one concludes that $\mathbf{M}_{p}$ can be made Schur stable by appropriate choices of the observer gain matrix $\mathbf{L}_{p}$, and consensus weight matrix ${\mathbf{W}}^{p} = \begin{bmatrix} {\mathbf{W}}^{p}_{21} & {\mathbf{W}}^{p}_{22} \end{bmatrix}$ (note that $\mathbf{W}^{p} \in \mathbb{R}^{(N-1) \times N}$ and $\mathbf{W}^{p}_{21}$ is the first column of $\mathbf{W}^{p}$). Specifically, non-zero weights are assigned in $\mathbf{W}^{p}$ only on the branches of the tree rooted at node $p$, consistent with the topological ordering. Notice that $\bar{\mathbf{E}}^{(pl)}_{}[k]$ is simply a permutation of the rows of $\mathbf{E}^{(l)}_{}[k]$ (permuted to match the order of indices in $\bar{\mathbf{E}}^{(p)}_{}[k]$). Further, based on our induction hypothesis, $\mathbf{E}^{(l)}_{}[k]$ converges to zero asymptotically (since $1 \leq l \leq p-1$). Thus, by Input to State Stability (ISS), we conclude that $\bar{\mathbf{E}}^{(p)}_{}[k],$ and hence $\mathbf{E}^{(p)}_{}[k]$, converges to zero asymptotically. We have thus proven that the composite estimation error for every sub-state asymptotically approaches zero, i.e., $\lim_{k\to\infty}  \mathbf{e}^{(j)}_{i}[k] = \mathbf{0}, \forall i,j \in \{1, \cdots N \}$. 

Finally, consider the error in estimation of the unobservable sub-state $\mathbf{z}_{\mathcal{U}}[k]$ (given by equation (\ref{eqn:errorunobs})). As the system and graph under consideration satisfy Condition 1 and Assumption \ref{assump:graph}, it must be that the pair $(\mathbf{A,C})$ is detectable. Thus, based on Proposition \ref{transformations},  the matrix $\mathbf{A}_{\mathcal{U}}$ in (\ref{eqn:errorunobs}) must be stable. Invoking ISS, we have that $\lim_{k\to\infty} \mathbf{e}_{i\mathcal{U}}[k] = \mathbf{0}, \forall i \in \{1, \cdots, N\}$. Thus, every node in the network can asymptotically estimate $\mathbf{z}[k]$, and hence $\mathbf{x}[k]$, as $\mathbf{x}[k]=\mathcal{T}\mathbf{z}[k]$.
\end{proof}

\subsection{A Compact Representation of the Proposed Observer}
\label{sec:compact}
 In this section, we combine the update equations (\ref{eqn:luen}), (\ref{eqn:consensus}) and (\ref{eqn:unobsestimate}) to obtain a compact representation of our distributed observer. To do so, we need to first introduce some notation. Accordingly, let $\mathbf{B}_j =\begin{bmatrix} \mathbf{0} \cdots \mathbf{I}_{o_j} \cdots \mathbf{0} \end{bmatrix}$ be the matrix that extracts the $j$-th sub-state from the transformed state vector $\mathbf{z}[k]$, i.e., $\mathbf{z}^{(j)}_{}[k]=\mathbf{B}_j\mathbf{z}[k]$. Similarly, let $\mathbf{B}_{\mathcal{U}}$ be such that $\mathbf{z}_{\mathcal{U}}[k]=\mathbf{B}_{\mathcal{U}}\mathbf{z}[k]$. Define $\mathbb{B} \triangleq diag(\mathbf{B}_1, \cdots, \mathbf{B}_N, \mathbf{B}_{\mathcal{U}})$. Next, notice that the transformed system matrix $\bar{\mathbf{A}}$ in equation (\ref{eqn:gen_form}) can be written as $\bar{\mathbf{A}} = \bar{\mathcal{A}}_1+\bar{\mathcal{A}}_2$, where $\bar{\mathcal{A}}_2=diag(\mathbf{A}_{11}, \cdots, \mathbf{A}_{NN}, \mathbf{A}_{\mathcal{U}})$, and $\bar{\mathcal{A}}_1$ is a block lower-triangular matrix given by $\bar{\mathbf{A}} - \bar{\mathcal{A}}_2$. Let $\mathbf{w}_{il}$ (where $l\in\mathcal{N}_i\setminus\{i\}$) be the vector of weights node $i$ associates with a neighbor $l$ for the estimation of the transformed state $\mathbf{z}[k]$. Based on our estimation scheme, note that at any given time-step $k$, node $i$ does not use the estimates received from its neighbors at time-step $k$ for estimating $\mathbf{z}^{(i)}_{}[k]$ and $\mathbf{z}_{\mathcal{U}}[k]$, and hence these weight vectors assume the following form: $\mathbf{w}_{il}={\begin{bmatrix} w^{1}_{il}, \cdots, w^{i-1}_{il}, 0, w^{i+1}_{il}, \cdots, w^{N}_{il}, 0 \end{bmatrix}}^{T}, \forall l\in\mathcal{N}_i\setminus\{i\}$. Also, notice that the element $w^{j}_{il}$ is not present in the vector if the $j$-th sub-state is empty (i.e., of dimension $0$). Similarly, let $\mathbf{w}_{ii}$ be a vector with a `$1$' in the elements corresponding to the $i$-th sub-state and the unobservable sub-state $\mathbf{z}_{\mathcal{U}}[k]$, and zeroes at all other positions. Finally, defining $\mathbb{H}_i \triangleq {\begin{bmatrix} {\mathbf{0}}^{T}  \cdots  {\mathbf{L}_i}^{T}  \cdots  {\mathbf{0}}^{T} \end{bmatrix}}^{T}$, using equations (\ref{eqn:luen}), (\ref{eqn:consensus}) and (\ref{eqn:unobsestimate}),  and noting that $\mathbf{z}[k]={\mathcal{T}}^{-1}\mathbf{x}[k]$, we obtain the following overall state estimate update rule at node $i$:
\begin{equation}
\boxed{
\hat{\mathbf{x}}_i[k+1]=\mathcal{T}\bar{\mathcal{A}}_1 
{\mathcal{T}}^{-1}\hat{\mathbf{x}}_i[k]+\underbrace{{\mathcal{T}}\mathbb{H}_i(\mathbf{y}_i[k]-{\mathbf{C}}_i\hat{\mathbf{x}}_{i}[k])}_{\hbox{innovation term}}
+\underbrace{\sum_{l\in\mathcal{N}_i}\mathbb{G}_{il}\hat{\mathbf{x}}_{l}[k],}_{\hbox{``consensus term''}}}
\label{eqn:Overallest}
\end{equation}
where $\hat{\mathbf{x}}_{i}[k]$ denotes the estimate of the state $\mathbf{x}[k]$ maintained by node $i$, and $\mathbb{G}_{il}={\mathcal{T}}\bar{\mathcal{A}}_2\mathbb{B}\left(\mathbf{w}_{il}\otimes{\mathcal{T}}^{-1}\right).$
\begin{remark}
From the structure of our overall estimator at node $i$, as represented by equation (\ref{eqn:Overallest}), it is easy to see that the estimator maintained at each node has dimension equal to $n$ (i.e.,  equal to that of the state). Thus, our approach alleviates the need to construct augmented observers such as those considered in \cite{martins3,wang}.
\end{remark}

\begin{remark} Note that all the transformation and gain matrices appearing in   \eqref{eqn:Overallest} can be computed offline during a centralized design phase. Thus, although the observer design and the subsequent analysis were done in the $\mathbf{z}[k]$ coordinate system, no inversion from $\mathbf{z}[k]$ to $\mathbf{x}[k]$ is necessary while implementing \eqref{eqn:Overallest} during run-time, i.e., the nodes directly exchange their estimates of the actual state $\mathbf{x}[k]$, and not $\mathbf{z}[k]$.
\end{remark}

\subsection{Summary of the Estimation Scheme for Systems and Graphs Satisfying Condition $1$}
\label{sec:summary1}
The proposed distributed observer scheme for systems and graphs satisfying Condition $1$ (under the assumption that the graph $\mathcal{G}$ is strongly connected) can be broadly decomposed into two main phases, namely the design phase and the distributed estimation phase. For clarity, we briefly enumerate the steps associated with each of these phases.
 
\textbf{Design Phase:} 
 
 \begin{itemize}
 \item Each node of the graph is assigned a unique integer between $1$ to $N$. Based on this numbering, the multi-sensor observable canonical decomposition (as outlined in the proof of Proposition \ref{transformations}) is performed, yielding the state $\mathbf{z}[k]={\mathcal{T}}^{-1}\mathbf{x}[k]$.
 \item Based on this transformation, each node is associated with a sub-state of $\mathbf{z}[k]$ that it is responsible for estimating. Recall that there are precisely $N$ sub-states, one corresponding to each node in the network; some of these sub-states might be empty. 
 \item For the estimation of a given sub-state, we construct a spanning tree rooted at the specific node which acts as the source of information for that sub-state. The resulting spanning tree guides the construction of the consensus weight matrix to be used for the estimation of that particular sub-state. We construct one spanning tree for the estimation of each non-empty sub-state.
\item Based on the constructed consensus weight matrices, and the Luenberger observer gains $\mathbf{L}_i$, the matrices $\mathcal{T}\bar{\mathcal{A}}_1\mathcal{T}^{-1}$, $\mathcal{T}\mathbb{H}_i$ and $\mathbb{G}_{il}$ in \eqref{eqn:Overallest} are computed for each node $i\in\mathcal{V}$.
\end{itemize}
\textbf{Estimation Phase (Run-time):}
\begin{itemize} 
 \item Each node employs a Luenberger observer for constructing an estimate of its corresponding sub-state, and runs consensus dynamics for estimating the sub-states corresponding to the remaining nodes in the network. Summarily, a node implements \eqref{eqn:Overallest} for estimating $\mathbf{x}[k]$.
\end{itemize}

\begin{remark} 
While the observer design procedure we have outlined (involving the multi-sensor decomposition, design of local observer gains, construction of spanning trees and selection of consensus weights) can be readily implemented in a centralized manner, it may also be possible to perform these steps in a distributed fashion.  This would require the nodes to assign themselves unique identifiers (or labels) and execute the multi-sensor decomposition in a round-robin fashion, followed by a distributed construction of spanning trees.  However, at present, the multi-sensor decomposition appears to be the most expensive portion of such an implementation (in terms of coordination and communication).  In Section \ref{sec:Case1}, we show that for systems and graphs that possess the additional structure described by Condition 2, we can avoid such a decomposition and obtain a scheme that permits an efficient distributed implementation (in both the design and run-time phases) at the potential cost of increasing the dimension of the observer.
\label{rem:centralized_design}
\end{remark}

Having established our approach for all systems and strongly connected graphs satisfying Condition 1, we now briefly describe the extension of our strategy to arbitrary directed networks.
 
 \subsection{Extension to General Directed Networks}
Our distributed observer design can be extended to general networks (satisfying Condition 1 but not necessarily Assumption \ref{assump:graph}) by first decomposing $\mathcal{G}$ into its strong components, and identifying each of the source components. Next, within a given source component, one simply follows the observer design procedure outlined in Section \ref{Observer} for a strongly connected graph, to obtain an estimator of the form  (\ref{eqn:Overallest}) for each node within the source component. Define $\mathcal{S} \triangleq \bigcup_{i=1}^{p} \mathcal{V}_i$ to be the set of all nodes that belong to the source components of $\mathcal{G}$. Let each node in $i\in\mathcal{V}\setminus\mathcal{S}$ employ a pure consensus strategy of the form
\begin{equation}
\hat{\mathbf{x}}_{i}[k+1]=\mathbf{A}\sum_{j\in\mathcal{N}_i}w_{ij}\hat{\mathbf{x}}_{j}[k],
\label{eqn:consnonsource}
\end{equation}
where $\hat{\mathbf{x}}_{i}[k]$ represents an estimate of the state maintained by the $i$-th node. The weights $w_{ij}$ are non-negative and satisfy
\begin{equation}
\sum_{j\in\mathcal{N}_i}w_{ij}=1, \quad \forall i \in  \mathcal{V}\setminus\mathcal{S}.
\label{eqn:stochasticnonsource}
\end{equation} 
 The design of consensus weights for the nodes in $\mathcal{V}\setminus\mathcal{S}$ is based on the observation that the set $\mathcal{V}\setminus\mathcal{S}$ can be spanned by a disjoint union of trees rooted in $\mathcal{S}$. By assigning consensus weights to only the branches of these trees (without violating the stochasticity condition imposed by equation (\ref{eqn:stochasticnonsource})), one obtains stable estimation error dynamics for each of the nodes in $\mathcal{V}\setminus\mathcal{S}$ (the details are similar to the proof of Theorem \ref{GEPthm}).
The above strategy readily leads to the following result.

\begin{theorem}
\label{Thmgeneral}
Consider a system $(\mathbf{A,C})$ and graph $\mathcal{G}$ satisfying Condition 1. Let each node in $\mathcal{S}$ run an observer of the form (\ref{eqn:Overallest}), and each node in  $\mathcal{V}\setminus\mathcal{S}$ run the consensus dynamics given by (\ref{eqn:consnonsource}). Then, there exists a choice of consensus weights and observer gain matrices that result in a distributed observer.
\end{theorem}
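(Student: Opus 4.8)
\emph{Plan.} The strategy is to split $\mathcal{V}$ into the source-component nodes $\mathcal{S}=\bigcup_{i=1}^{p}\mathcal{V}_i$ and the remaining nodes $\mathcal{V}\setminus\mathcal{S}$, handle each group separately, and then stitch the two arguments together. For $\mathcal{S}$ I would reduce to Theorem \ref{GEPthm} applied to each source component in isolation; for $\mathcal{V}\setminus\mathcal{S}$ I would set up a cascade (ISS) argument on a spanning forest whose roots lie in $\mathcal{S}$, driven by the already-converging source-node estimates.

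First I would argue that each source component $(\mathcal{V}_i,\mathcal{E}_i)$ satisfies the hypotheses of Theorem \ref{GEPthm} on its own. By definition a source component is strongly connected, so Assumption \ref{assump:graph} holds within it, and Condition 1 guarantees that $(\mathbf{A},\mathbf{C}_{\mathcal{V}_i})$ is detectable, so the isolated component (being its own single source component) satisfies Condition 1 as well. The crucial point is \emph{autonomy}: a source component has no incoming edges from $\mathcal{V}\setminus\mathcal{V}_i$, so the observer of Section \ref{Observer} run inside $\mathcal{V}_i$ uses only intra-component measurements and communication and is unaffected by the rest of the network. Applying Theorem \ref{GEPthm} to each of the $p$ source components independently then furnishes observer gains and consensus weights for which $\lim_{k\to\infty}\|\hat{\mathbf{x}}_i[k]-\mathbf{x}[k]\|=0$ for every $i\in\mathcal{S}$.

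Next I would treat $\mathcal{V}\setminus\mathcal{S}$. Writing $\mathbf{e}_i[k]\triangleq\hat{\mathbf{x}}_i[k]-\mathbf{x}[k]$ and using $\mathbf{x}[k+1]=\mathbf{A}\mathbf{x}[k]$ together with the stochasticity condition (\ref{eqn:stochasticnonsource}) in the rule (\ref{eqn:consnonsource}), the error obeys $\mathbf{e}_i[k+1]=\mathbf{A}\sum_{j\in\mathcal{N}_i}w_{ij}\mathbf{e}_j[k]$. The key structural fact is that, in the condensation DAG of $\mathcal{G}$, every strong component is reachable from some source component; hence every node in $\mathcal{V}\setminus\mathcal{S}$ is reachable from $\mathcal{S}$, and $\mathcal{V}\setminus\mathcal{S}$ can be covered by a disjoint union of trees rooted in $\mathcal{S}$. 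Assigning weight $1$ to each non-source node's unique parent edge (and zero elsewhere) respects (\ref{eqn:stochasticnonsource}) and makes the block of the weight matrix internal to $\mathcal{V}\setminus\mathcal{S}$ strictly lower triangular under a topological ordering of the forest. Stacking these errors into $\mathbf{E}_{ns}[k]$, the dynamics read $\mathbf{E}_{ns}[k+1]=(\mathbf{W}_{ns}\otimes\mathbf{A})\mathbf{E}_{ns}[k]+\mathbf{u}[k]$, where $\mathbf{W}_{ns}$ is nilpotent, so $\mathbf{W}_{ns}\otimes\mathbf{A}$ has all eigenvalues zero by the Kronecker eigenvalue-product rule invoked in Theorem \ref{GEPthm}, and $\mathbf{u}[k]$ is a linear combination of source-node errors, which vanish by the first part. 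Invoking ISS exactly as in the proof of Theorem \ref{GEPthm} yields $\lim_{k\to\infty}\mathbf{E}_{ns}[k]=\mathbf{0}$. Equivalently one can induct on tree depth: a level-one child obeys $\mathbf{e}_i[k+1]=\mathbf{A}\mathbf{e}_{p(i)}[k]$ with parent $p(i)\in\mathcal{S}$, and $\mathbf{e}_i[k]\to\mathbf{0}$ follows because $\mathbf{A}$ is a fixed matrix applied to a vanishing sequence; the claim then propagates down each tree.

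The main obstacle is this second part rather than the first: the delicate points are justifying the reachability claim underlying the existence of the spanning forest, and recognizing that the cascade converges for an \emph{arbitrary} (possibly unstable) $\mathbf{A}$ precisely because the internal coupling matrix $\mathbf{W}_{ns}\otimes\mathbf{A}$ is nilpotent while the forcing input decays to zero. Once the autonomy of source components and this forest construction are in place, convergence of $\hat{\mathbf{x}}_i[k]\to\mathbf{x}[k]$ for all $i\in\mathcal{V}$ is immediate, completing the proof.
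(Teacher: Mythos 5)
Your proposal is correct and follows essentially the same route as the paper: the paper likewise applies the Theorem \ref{GEPthm} construction inside each (autonomous, strongly connected) source component, and for $\mathcal{V}\setminus\mathcal{S}$ assigns consensus weights only along a disjoint union of trees rooted in $\mathcal{S}$, obtaining stable (nilpotent internal, ISS-driven) error dynamics exactly as you describe. Your write-up in fact supplies the details the paper leaves implicit (''the details are similar to the proof of Theorem \ref{GEPthm}''), including the reachability argument for the spanning forest and the nilpotency of $\mathbf{W}_{ns}\otimes\mathbf{A}$.
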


As discussed in Remark~\ref{rem:centralized_design}, our distributed observer design starts with the multi-sensor observable decomposition described in Proposition~\ref{transformations}, which transforms the system into a form that identifies the sub-states that each node is responsible for estimating.  This decomposition requires knowledge of the measurement matrices of each node, and thus is most amenable to a centralized implementation.\footnote{A centralized design phase is also commonly assumed in the existing literature on distributed observers, e.g., \cite{Khanobs2,ugrinov,wang,martins1,martins2,martins3}.}  In the next section, we show that for systems and networks satisfying Condition 2, the design of the observer itself can be readily done in a distributed manner.  

%%%%%%%%%%%%%%%%%%%%%%%%%%%%%%%%%%%
%%%%%%%%%%%%%%%%%%%%%%%%%%%%%%%%%
\section{Estimation Scheme for systems and graphs satisfying Condition $2$}
\label{sec:Case1}
Recall that for systems and graphs satisfying Condition 2, for each eigenvalue of the plant, there is at least one node in each source component that can detect that eigenvalue. As we will show, this fact allows each node in the network to identify the sub-states it is responsible for estimating, without having to exchange any information with neighbors.  

To this end, let $\mathcal{T}$ be a non-singular transformation matrix which transforms $\mathbf{A}$ into its Jordan canonical form $\mathbf{J}$, i.e., $\mathbf{A}=\mathcal{T}\mathbf{J}\mathcal{T}^{-1}$. With $\mathbf{z}[k]=\mathcal{T}^{-1}\mathbf{x}[k]$, the dynamics (\ref{eqn:plant}) are transformed into the form
\begin{equation}
\begin{split}
\mathbf{z}[k+1]&=\mathbf{Jz}[k],\\
\mathbf{y}_i[k]&=\bar{\mathbf{C}}_i\mathbf{z}[k], \quad \forall {i} \in \{1, \cdots, N\}
\end{split}
\label{eqn:REPsystem}
\end{equation}
where $\mathbf{J}=\mathcal{T}^{-1}\mathbf{A}\mathcal{T}$ and $\bar{\mathbf{C}}_i=\mathbf{C}_i\mathcal{T}$.\footnote{Note that the matrices $\mathcal{T}$ and $\bar{\mathbf{C}}_i$ in (\ref{eqn:REPsystem}) are in general different from those in (\ref{eqn:gen_form}); we adopt this abuse of notation to avoid cluttering the exposition with additional symbols.} Notice that this transformation relies only on the knowledge of the system matrix $\mathbf{A}$ (which is assumed to be known by all of the nodes). Hence, all nodes can perform this transformation {\it in parallel} (e.g., by using an agreed-upon convention for ordering the eigenvalues and corresponding eigenvectors). We denote the eigenvalues of $\mathbf{J}$ (which are the same as those of $\mathbf{A}$) by $\lambda_1, \cdots, \lambda_{\gamma}$, where $\gamma$ represents the number of distinct eigenvalues of $\mathbf{A}$. Let $\mathbf{J}=diag(\mathbf{J}_1, \cdots, \mathbf{J}_{\gamma})$, where we group all of the Jordan blocks associated with  $\lambda_j \in  sp(\mathbf{J}) $ into the block diagonal matrix $\mathbf{J}_j \in \mathbb{R}^{a_{\mathbf{J}}(\lambda_j) \times a_{\mathbf{J}}(\lambda_j)}$. The portion of the state $\mathbf{z}[k]$ associated with the eigenvalue $\lambda_j$ is termed as the sub-state $\mathbf{z}^{(j)}[k] \in {\mathbb{R}}^{a_\mathbf{J}(\lambda_j)}$. Let $\hat{\mathbf{z}}^{(j)}_{i}[k]$ represent the estimate of $\mathbf{z}^{(j)}[k]$ maintained by node $i$. Note that if each node in the network can accurately estimate $\mathbf{z}[k]$, then they can also estimate $\mathbf{x}[k]$ using the relation $\mathbf{x}[k]=\mathcal{T}\mathbf{z}[k]$. In view of this, we now develop a scheme for estimating $\mathbf{z}[k]$. 

\subsection{Distributed Observer Design}
\subsubsection{Design of Local Luenberger Observers} Let $\mathcal{O}_i$  represent the set of detectable eigenvalues of node $i$.  For estimating the sub-states corresponding to the eigenvalues in $\mathcal{O}_i$, node $i$ constructs a simple Luenberger observer using its own measurements. To this end, permute the states $\mathbf{z}[k]$ in \eqref{eqn:REPsystem} to obtain
 \begin{equation}
\begin{split}
 \underbrace{\begin{bmatrix} \mathbf{z}_{\mathcal{O}_i}[k+1] \\ \mathbf{z}_{\mathcal{UO}_i}[k+1] \end{bmatrix}}_{\bar{\mathbf{z}}_i[k+1]} &= \underbrace{\begin{bmatrix} 
\bar{\mathbf{J}}_{\mathcal{O}_i} & \mathbf{0} \\
\mathbf{0}  & \bar{\mathbf{J}}_{\mathcal{UO}_i} 
\end{bmatrix}}_{\bar{\mathcal{J}}_i} \underbrace{\begin{bmatrix} \mathbf{z}_{\mathcal{O}_i}[k] \\ \mathbf{z}_{\mathcal{UO}_i}[k] \end{bmatrix}}_{\bar{\mathbf{z}}_i[k]}, \\
\mathbf{y}_i[k] &= \underbrace{\begin{bmatrix} \bar{\mathbf{C}}_{\mathcal{O}_i} & \bar{\mathbf{C}}_{\mathcal{UO}_i} \end{bmatrix}}_{\bar{\mathcal{C}}_i}\bar{\mathbf{z}}_i[k]. \\
\end{split}
\label{eqn:Case1form}
\end{equation}
The permuted state $\bar{\mathbf{z}}_i[k]$ will be represented by   $\mathbf{z}[k]=\mathbb{P}_i\bar{\mathbf{z}}_i[k]$, where $\mathbb{P}_i$ is an appropriate permutation matrix. In the above equations,  $\bar{\mathbf{J}}_{\mathcal{O}_i}$ consists of all Jordan blocks corresponding to the detectable eigenvalues of node $i$, and $\bar{\mathbf{J}}_{\mathcal{UO}_i}$ denotes the collection of Jordan blocks corresponding to the undetectable eigenvalues of node $i$.
Similarly, $\bar{\mathbf{C}}_{\mathcal{O}_i}$ contains the columns of $\bar{\mathbf{C}}_i$ corresponding to the matrix $\bar{\mathbf{J}}_{\mathcal{O}_i}$, with an analogous definition for $\bar{\mathbf{C}}_{\mathcal{UO}_i}$.
The sub-states corresponding to the detectable and undetectable eigenvalues of node $i$ are grouped into the composite vectors $\mathbf{z}_{\mathcal{O}_i}[k] \in {\mathbb{R}}^{o_i}$ and  $\mathbf{z}_{\mathcal{UO}_i}[k]$ respectively.

Based on \eqref{eqn:Case1form}, notice that the output $\mathbf{y}_i[k]$ is affected by elements of $\mathbf{z}_{\mathcal{UO}_i}[k]$ (through $\bar{\mathbf{C}}_{\mathcal{UO}_i}$) and thus we will estimate those elements as well in order to recover $\mathbf{z}_{\mathcal{O}_i}[k]$. To this end, let $\bar{\mathbf{T}}_i$ be a non-singular matrix which performs an observable canonical decomposition of the pair $(\bar{\mathbf{J}}_{\mathcal{UO}_i},\bar{\mathbf{C}}_{\mathcal{UO}_i})$ in \eqref{eqn:Case1form}. Consider the following transformation matrix:
\begin{equation}
\mathbf{T}_i=\begin{bmatrix} \mathbf{I}_{o_i} & \mathbf{0} \\ \mathbf{0} & \bar{\mathbf{T}}_i \end{bmatrix}.
\label{eqn:Tbari}
\end{equation}
Define the coordinate transformation $\bar{\mathbf{z}}_i[k]=\mathbf{T}_i\mathbf{v}_i[k]$ (this transformation is specific to node $i$). Based on this transformation, and equations (\ref{eqn:Case1form}) and (\ref{eqn:Tbari}), the dynamics at node $i$ can be reformulated as
\begin{equation} 
 \begin{split}
 \underbrace{\left[\begin{array}{c}
\mathbf{z}_{\mathcal{O}_i}[k+1]\\
\mathbf{w}_{\mathcal{O}_i}[k+1]\\
\mathbf{w}_{\mathcal{UO}_i}[k+1]\\
\end{array}\right]}_{\mathbf{v}_i[k+1]} &= 
\underbrace{\left[
\begin{array}{c|cc}
\bar{\mathbf{J}}_{\mathcal{O}_i}  & \multicolumn{2}{c}{\mathbf{0}} \\
\hline
\multirow{2}{*}{$\mathbf{0}$}
 & 
\mathbf{G}_{\mathcal{O}_i} & \mathbf{0} \\
&
\star & \mathbf{G}_{\mathcal{UO}_i}\\
\end{array}
\right]}_{\mathbf{T}^{-1}_i\bar{\mathcal{J}}_i\mathbf{T}_i}\underbrace{\left[\begin{array}{c}
\mathbf{z}_{\mathcal{O}_i}[k]\\
\mathbf{w}_{\mathcal{O}_i}[k]\\
\mathbf{w}_{\mathcal{UO}_i}[k]\\
\end{array}\right]}_{\mathbf{v}_i[k]}, \\
\mathbf{y}_i[k] &= \hspace{6.5mm} \underbrace{\left[
\begin{array}{c|cc}
\bar{\mathbf{C}}_{\mathcal{O}_i} & \mathbf{H}_{\mathcal{O}_i} & \mathbf{0} \end{array}\right]}_{\bar{\mathcal{C}}_i\mathbf{T}_i}\mathbf{v}_i[k],
\end{split}
\label{eqn:Case1obs1}
\end{equation}
where
\begin{equation}
\begin{split}
\bar{\mathbf{T}}^{-1}_i\bar{\mathbf{J}}_{\mathcal{UO}_i}\bar{\mathbf{T}}_i &= \begin{bmatrix} \mathbf{G}_{\mathcal{O}_i} & \mathbf{0} \\
\star & \mathbf{G}_{\mathcal{UO}_i}\end{bmatrix}, \\
\bar{\mathbf{C}}_{\mathcal{UO}_i}\bar{\mathbf{T}}_i &=\begin{bmatrix} \mathbf{H}_{\mathcal{O}_i} & \mathbf{0}\end{bmatrix}.
\end{split}
\label{eqn:transCond2}
\end{equation}
Define 
\begin{equation}
\mathbb{J}_i \triangleq diag(\bar{\mathbf{J}}_{\mathcal{O}_i},  \mathbf{G}_{\mathcal{O}_i}), \enspace \mathbb{F}_i \triangleq \begin{bmatrix} \bar{\mathbf{C}}_{\mathcal{O}_i} & \mathbf{H}_{\mathcal{O}_i} \end{bmatrix},
\label{eqn:JiCi}
\end{equation}
 and $\mathbf{s}_i[k] \triangleq {\begin{bmatrix} {\mathbf{z}_{\mathcal{O}_i}}^T[k] & {\mathbf{w}_{\mathcal{O}_i}}^T[k]\end{bmatrix}}^T$. Based on the dynamics (\ref{eqn:Case1obs1}), the  local Luenberger observer maintained by node $i$ for estimating $\mathbf{z}_{\mathcal{O}_i}[k]$ has the form
\begin{equation}
\hat{\mathbf{s}}_i[k+1]=\mathbb{J}_i\hat{\mathbf{s}}_i[k]+\mathbb{L}_i(\mathbf{y}_i[k]-\mathbb{F}_i\hat{\mathbf{s}}_i[k]),
\label{eqn:luenCase1}
\end{equation}
where $\mathbb{L}_i$ is a gain matrix which needs to be designed for node $i$ and $\hat{\mathbf{s}}_i[k]$ is the estimate of $\mathbf{s}_i[k]$ maintained by node $i$. Using (\ref{eqn:luenCase1}), $\hat{\mathbf{z}}_{\mathcal{O}_i}[k]$ can then be updated as $\hat{\mathbf{z}}_{\mathcal{O}_i}[k+1] = \begin{bmatrix}\mathbf{I}_{o_i} & \mathbf{0} \end{bmatrix} \hat{\mathbf{s}}_i[k+1]$.

Based on the (local) transformation \eqref{eqn:Case1obs1} and the (local) observer \eqref{eqn:luenCase1}, we obtain the following result.

\begin{lemma}
\label{lemmaSEP}
For a system $(\mathbf{A,C})$ and graph $\mathcal{G}$ satisfying Condition $2$, let every node $i \in \mathcal{V}$ run a Luenberger observer of the form (\ref{eqn:luenCase1}). Then, there exists a choice of observer gain $\mathbb{L}_i$, which can be designed locally, such that for each $\lambda_j \in \mathcal{O}_{i}$, $\lim_{k\to\infty}||\hat{\mathbf{z}}^{(j)}_{i}[k]-\mathbf{z}^{(j)}[k]||=0$.
\end{lemma}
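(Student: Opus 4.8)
The plan is to recognize that \eqref{eqn:luenCase1} is a standard Luenberger observer for an autonomous LTI pair, and to reduce the claim to the detectability of that pair. First I would exploit the block structure of \eqref{eqn:Case1obs1}: the first two block-rows of $\mathbf{T}_i^{-1}\bar{\mathcal{J}}_i\mathbf{T}_i$ carry a zero in the column that multiplies $\mathbf{w}_{\mathcal{UO}_i}[k]$, so the composite sub-state $\mathbf{s}_i[k]$ evolves autonomously as $\mathbf{s}_i[k+1]=\mathbb{J}_i\mathbf{s}_i[k]$, decoupled from the unobservable tail $\mathbf{w}_{\mathcal{UO}_i}[k]$. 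Since the last block of $\bar{\mathcal{C}}_i\mathbf{T}_i$ is zero, the measurement is $\mathbf{y}_i[k]=\mathbb{F}_i\mathbf{s}_i[k]$, again with no dependence on $\mathbf{w}_{\mathcal{UO}_i}[k]$. Consequently the error $\mathbf{e}_i[k]\triangleq\hat{\mathbf{s}}_i[k]-\mathbf{s}_i[k]$ evolves as $\mathbf{e}_i[k+1]=(\mathbb{J}_i-\mathbb{L}_i\mathbb{F}_i)\mathbf{e}_i[k]$, and a gain $\mathbb{L}_i$ driving $\mathbf{e}_i[k]\to\mathbf{0}$ exists precisely when the pair $(\mathbb{J}_i,\mathbb{F}_i)$ in \eqref{eqn:JiCi} is detectable. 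The whole lemma therefore rests on proving this detectability.

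I would establish detectability of the block-diagonal pair $(\mathbb{J}_i,\mathbb{F}_i)$ via the PBH test, assembling two facts about its diagonal blocks. The first pair $(\mathbf{G}_{\mathcal{O}_i},\mathbf{H}_{\mathcal{O}_i})$ is \emph{observable} by construction, since \eqref{eqn:transCond2} is exactly the observable canonical decomposition of $(\bar{\mathbf{J}}_{\mathcal{UO}_i},\bar{\mathbf{C}}_{\mathcal{UO}_i})$ and $(\mathbf{G}_{\mathcal{O}_i},\mathbf{H}_{\mathcal{O}_i})$ is its observable part. The second pair $(\bar{\mathbf{J}}_{\mathcal{O}_i},\bar{\mathbf{C}}_{\mathcal{O}_i})$ is detectable: because $\mathbf{J}$ is block diagonal with the Jordan blocks of distinct eigenvalues decoupled, the pencil $\left[\begin{smallmatrix}\mathbf{J}-\lambda\mathbf{I}\\ \bar{\mathbf{C}}_i\end{smallmatrix}\right]$ can lose rank at $\lambda$ only through the block carrying eigenvalue $\lambda$; hence the defining detectability of each $\lambda_j\in\mathcal{O}_i\cap\Lambda_{U}(\mathbf{A})$ is equivalent to observability of the corresponding block of $\bar{\mathbf{J}}_{\mathcal{O}_i}$ through $\bar{\mathbf{C}}_{\mathcal{O}_i}$ alone.

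The step I expect to be the main obstacle is combining these two individually detectable blocks into a single detectable pair, since for a block-diagonal $\mathbb{J}_i$ the PBH test at an eigenvalue $\lambda$ could in principle couple the two blocks whenever they share $\lambda$. The resolving observation is a spectral-disjointness argument: the marginally stable and unstable eigenvalues of $\bar{\mathbf{J}}_{\mathcal{O}_i}$ are exactly the detectable eigenvalues $\mathcal{O}_i\cap\Lambda_{U}(\mathbf{A})$, whereas the eigenvalues of $\mathbf{G}_{\mathcal{O}_i}$ lie in $\mathcal{UO}_i$, the undetectable set of node $i$; by definition these two sets are disjoint. Therefore, for each $\lambda\in\Lambda_{U}(\mathbf{A})$, at most one diagonal block of $\mathbb{J}_i$ supplies an eigenvector at $\lambda$, so the PBH condition for $(\mathbb{J}_i,\mathbb{F}_i)$ reduces to the single-block condition already verified above. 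This yields detectability of $(\mathbb{J}_i,\mathbb{F}_i)$, and hence a Schur-stabilizing gain $\mathbb{L}_i$ that node $i$ can compute using only $\mathbf{A}$ and $\mathbf{C}_i$. Since $\mathbf{e}_i[k]\to\mathbf{0}$ and $\hat{\mathbf{z}}_{\mathcal{O}_i}[k]=\begin{bmatrix}\mathbf{I}_{o_i}&\mathbf{0}\end{bmatrix}\hat{\mathbf{s}}_i[k]$, every detectable sub-state estimate converges, giving $\lim_{k\to\infty}\|\hat{\mathbf{z}}^{(j)}_{i}[k]-\mathbf{z}^{(j)}[k]\|=0$ for all $\lambda_j\in\mathcal{O}_i$.
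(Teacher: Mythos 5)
Your proof is correct and takes essentially the same route as the paper: the paper's (very terse) proof also reduces the lemma to detectability of the pair $(\mathbb{J}_i,\mathbb{F}_i)$, citing exactly the two facts you establish — that $(\bar{\mathbf{J}}_{\mathcal{O}_i},\bar{\mathbf{C}}_{\mathcal{O}_i})$ and $(\mathbf{G}_{\mathcal{O}_i},\mathbf{H}_{\mathcal{O}_i})$ are each detectable by construction, and that $\bar{\mathbf{J}}_{\mathcal{O}_i}$ and $\mathbf{G}_{\mathcal{O}_i}$ share no eigenvalues so the two blocks cannot interfere in the PBH test. Your write-up merely fills in the decoupling of $\mathbf{s}_i[k]$ from $\mathbf{w}_{\mathcal{UO}_i}[k]$ and the Jordan-block PBH details that the paper leaves implicit.
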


\begin{proof}
The proof follows straightforwardly  by noting that $(\mathbb{J}_i,\mathbb{F}_i)$ defined in \eqref{eqn:JiCi} is detectable (since $\bar{\mathbf{J}}_{\mathcal{O}_i}$ and $\mathbf{G}_{\mathcal{O}_i}$ do not share any eigenvalues, and each of the pairs $(\bar{\mathbf{J}}_{\mathcal{O}_i}, \bar{\mathbf{C}}_{\mathcal{O}_i})$ and $(\mathbf{G}_{\mathcal{O}_i}, \mathbf{H}_{\mathcal{O}_i})$ are detectable, by construction).  
\end{proof}

Having established that each node $i \in \mathcal{V}$ can asymptotically recover $\mathbf{z}_{\mathcal{O}_i}[k]$ in \eqref{eqn:Case1form} purely locally, we now devise a method that allows each node to estimate the sub-states corresponding to the locally undetectable eigenvalues. 

\subsubsection{Consensus dynamics} 
Consider an eigenvalue $\lambda_j \in \mathcal{UO}_i$ (recall $\mathcal{UO}_i$ represents the set of undetectable eigenvalues of node $i$). For such an eigenvalue, node $i$ has to rely on the information received from its neighbors in order to estimate $\mathbf{z}^{(j)}[k]$. To this end, we propose the following consensus strategy to be followed by every node $i \in \mathcal{V}\setminus\mathcal{S}_j$ for updating their respective estimates of $\mathbf{z}^{(j)}[k]$:\footnote{Recall that $\mathcal{S}_j$ denotes the set of root nodes that can detect $\lambda_j$.}
\begin{equation}
\hat{\mathbf{z}}^{(j)}_{i}[k+1]=\mathbf{J}_j\sum_{l\in\mathcal{N}_i}w^{j}_{il}\hat{\mathbf{z}}^{(j)}_{l}[k],
\label{eqn:consensusREP}
\end{equation}
where $w^{j}_{il}$ is the weight the $i$-th node associates with the $l$-th node for the estimation of the $j$-th sub-state. Each weight is non-negative and satisfies
\begin{equation}
\sum_{l\in\mathcal{N}_i}w^{j}_{il}=1, \quad \forall \lambda_j \in \mathcal{UO}_{i}.
\label{eqn:REPstochasticity}
\end{equation}
Let $\mathcal{UO}_i = \{\lambda_{n_1}, \cdots, \lambda_{n_{\gamma_i}}\}$, where $\gamma_i = |\mathcal{UO}_i| = \gamma - \sigma_i$ (recall $\sigma_i=|\mathcal{O}_i|$, and $\gamma$ is the number of distinct eigenvalues of $\mathbf{A}$). Define $\mathbf{B}_j =\begin{bmatrix} \mathbf{0} \cdots \mathbf{I}_{o_j} \cdots \mathbf{0} \end{bmatrix}$ as the matrix which extracts the $j$-th sub-state from the state vector $\mathbf{z}[k]$, i.e., we have $\mathbf{z}^{(j)}[k]=\mathbf{B}_j\mathbf{z}[k]$. Also, let $\mathbf{w}_{il} = {\begin{bmatrix} w^{n_1}_{il} \cdots w^{n_{\gamma_i}}_{il} \end{bmatrix}}^{T}$ denote the vector of consensus weights the $i$-th node assigns to the $l$-th node ($l \in \mathcal{N}_i$) for the estimation of the sub-states corresponding to each of its undetectable eigenvalues. Then, noting the definition of $\bar{\mathbf{J}}_{\mathcal{UO}_i}$ and using the consensus equation given by (\ref{eqn:consensusREP}), we obtain
\begin{equation}
\hat{\mathbf{z}}_{\mathcal{UO}_i}[k+1]=\bar{\mathbf{J}}_{\mathcal{UO}_i}\mathbb{B}_i \sum_{l\in\mathcal{N}_i}\mathbf{w}_{il} \otimes \hat{\mathbf{z}}_{l}[k],
\label{eqn:compeq3}
\end{equation}
where $\mathbb{B}_i=diag(\mathbf{B}_{n_1}, \cdots, \mathbf{B}_{\gamma_i})$. Noting that $\mathbf{x}[k]=\mathcal{T}\mathbf{z}[k]$, and $\mathbf{z}[k]=\mathbb{P}_i\bar{\mathbf{z}}_i[k]$ (recall $\bar{\mathbf{z}}_i[k]={\begin{bmatrix} {\mathbf{z}_{\mathcal{O}_i}[k]}^{T} & {\mathbf{z}_{\mathcal{UO}_i}[k]}^{T} \end{bmatrix}}^{T}$), and using equations (\ref{eqn:luenCase1}) and (\ref{eqn:compeq3}), we obtain the governing equations of the distributed observer maintained at node $i$ as
\begin{subequations}
\begin{empheq}[box=\widefbox]{align}
\hat{\mathbf{s}}_i[k+1]&=\mathbb{J}_i\hat{\mathbf{s}}_i[k]+\mathbb{L}_i(\mathbf{y}_i[k]-\mathbb{F}_i\hat{\mathbf{s}}_i[k]),\\
\hat{\mathbf{z}}_{\mathcal{O}_i}[k+1]&= \begin{bmatrix}\mathbf{I}_{o_i} & \mathbf{0} \end{bmatrix} \hat{\mathbf{s}}_i[k+1],\\
\hat{\mathbf{z}}_{\mathcal{UO}_i}[k+1]&=\bar{\mathbf{J}}_{\mathcal{UO}_i}\mathbb{B}_i \sum_{l\in\mathcal{N}_i}\mathbf{w}_{il} \otimes ({\mathcal{T}}^{-1}\hat{\mathbf{x}}_{l}[k]),\\
\hat{\mathbf{x}}_i[k+1]&=\mathcal{T}\mathbb{P}_i\begin{bmatrix} \hat{\mathbf{z}}_{\mathcal{O}_i}[k+1] \\ \hat{\mathbf{z}}_{\mathcal{UO}_i}[k+1] \end{bmatrix}.
\end{empheq}
\label{eqn:Cond2obs}
\end{subequations}
Note that since $\mathcal{T}$ depends only on the system matrix $\mathbf{A}$ which is assumed to be time-invariant, the term $\mathcal{T}^{-1}$ appearing in (\ref{eqn:Cond2obs}c) needs to be computed only once.
\subsection{Analysis of the Estimation Scheme for Systems and Graphs Satisfying Condition $2$}
The following is the main result of this section.

\begin{theorem}
\label{theo:L2}
Consider a system $(\mathbf{A,C})$ and graph $\mathcal{G}$ satisfying Condition $2$. Then, for each node $i \in \{1,2, \cdots, N\}$, there exists a choice of observer gain matrix $\mathbb{L}_i$, and consensus weights $w^{j}_{il}$, $\forall \lambda_j \in \mathcal{UO}_i$, $l\in\mathcal{N}_i$, such that the update rules given by (\ref{eqn:Cond2obs}) form a distributed observer.
\end{theorem}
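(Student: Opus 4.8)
The plan is to show that each node's estimate $\hat{\mathbf{z}}^{(j)}_i[k]$ of every sub-state converges to $\mathbf{z}^{(j)}[k]$; since $\hat{\mathbf{x}}_i[k]=\mathcal{T}\hat{\mathbf{z}}_i[k]$ and $\mathbf{x}[k]=\mathcal{T}\mathbf{z}[k]$ (via (\ref{eqn:Cond2obs})), asymptotic convergence of all sub-state errors yields $\hat{\mathbf{x}}_i[k]\to\mathbf{x}[k]$. The crucial simplification relative to Theorem \ref{GEPthm} is that working in the Jordan coordinates $\mathbf{J}=diag(\mathbf{J}_1,\ldots,\mathbf{J}_\gamma)$ \emph{decouples} the sub-states completely: the dynamics of $\mathbf{z}^{(j)}[k]$ depend only on $\mathbf{J}_j$ and on no other sub-state. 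Consequently I would argue separately and independently for each eigenvalue $\lambda_j$, thereby avoiding the induction over sub-states that was required in Theorem \ref{GEPthm}.

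Fix $\lambda_j$ and write $\mathbf{e}^{(j)}_i[k]\triangleq\hat{\mathbf{z}}^{(j)}_i[k]-\mathbf{z}^{(j)}[k]$. The nodes split cleanly into root nodes $\mathcal{S}_j$ (those with $\lambda_j\in\mathcal{O}_i$) and the rest (those with $\lambda_j\in\mathcal{UO}_i$). For $i\in\mathcal{S}_j$, Lemma \ref{lemmaSEP} furnishes a locally-designed gain $\mathbb{L}_i$ rendering the observer error for the detectable pair $(\mathbb{J}_i,\mathbb{F}_i)$ Schur stable, so that $\mathbf{e}^{(j)}_i[k]\to\mathbf{0}$ regardless of the neighbors; these nodes thus act as uncorrupted information sources for $\lambda_j$. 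For $i\notin\mathcal{S}_j$, subtracting the true dynamics $\mathbf{z}^{(j)}[k+1]=\mathbf{J}_j\mathbf{z}^{(j)}[k]$ (and using the stochasticity (\ref{eqn:REPstochasticity})) from the consensus rule (\ref{eqn:consensusREP}) gives the linear error recursion $\mathbf{e}^{(j)}_i[k+1]=\mathbf{J}_j\sum_{l\in\mathcal{N}_i}w^{j}_{il}\mathbf{e}^{(j)}_l[k]$. Stacking the non-root errors into $\tilde{\mathbf{E}}^{(j)}[k]$ and the root errors into $\mathbf{E}^{(j)}_{\mathcal{S}_j}[k]$ yields $\tilde{\mathbf{E}}^{(j)}[k+1]=(\mathbf{W}^{j}_{22}\otimes\mathbf{J}_j)\tilde{\mathbf{E}}^{(j)}[k]+(\mathbf{W}^{j}_{21}\otimes\mathbf{J}_j)\mathbf{E}^{(j)}_{\mathcal{S}_j}[k]$, which mirrors the block structure of $\mathbf{M}_1$ in the proof of Theorem \ref{GEPthm}.

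The heart of the argument, and the step I expect to be the main obstacle, is showing the consensus weights can be chosen so that $\mathbf{W}^{j}_{22}\otimes\mathbf{J}_j$ is Schur stable even when $|\lambda_j|\geq 1$. Here Condition 2 is essential (note that any $\lambda_j\in\mathcal{UO}_i$ necessarily lies in $\Lambda_{U}(\mathbf{A})$, since stable eigenvalues are detectable by default, so Condition 2 applies to it): it guarantees that \emph{every} source component of $\mathcal{G}$ contains at least one root node for $\lambda_j$. Since in the condensation DAG every node of $\mathcal{G}$ is reachable from some source component, and source components are strongly connected, every non-root node is reachable via a directed path from some root node in $\mathcal{S}_j$. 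I would therefore construct a spanning forest of directed trees rooted at the nodes of $\mathcal{S}_j$ that covers $\mathcal{V}\setminus\mathcal{S}_j$, assigning weight $1$ on each non-root node's edge to its unique parent and $0$ elsewhere (which respects (\ref{eqn:REPstochasticity})). Under a topological ordering consistent with this forest, $\mathbf{W}^{j}_{22}$ becomes strictly lower triangular and hence nilpotent, so by the Kronecker eigenvalue-product identity all eigenvalues of $\mathbf{W}^{j}_{22}\otimes\mathbf{J}_j$ equal zero, making it Schur stable irrespective of $\mathbf{J}_j$.

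With Schur stability of $\mathbf{W}^{j}_{22}\otimes\mathbf{J}_j$ in hand, and since the driving input $\mathbf{E}^{(j)}_{\mathcal{S}_j}[k]$ vanishes asymptotically (the root-node errors established above), Input-to-State Stability gives $\tilde{\mathbf{E}}^{(j)}[k]\to\mathbf{0}$, and hence $\mathbf{e}^{(j)}_i[k]\to\mathbf{0}$ for every node $i$. Repeating this for each of the $\gamma$ distinct eigenvalues (legitimate precisely because the sub-states are decoupled in the Jordan basis) establishes $\hat{\mathbf{z}}_i[k]\to\mathbf{z}[k]$, and therefore $\hat{\mathbf{x}}_i[k]\to\mathbf{x}[k]$, at every node. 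I would close by remarking why Condition 1 alone is insufficient for this particular scheme: if a source component is only \emph{collectively} detectable for $\lambda_j$ with no single root node, then $\mathcal{S}_j$ cannot reach the nodes of that component, the required spanning forest fails to exist, and those nodes cannot recover $\mathbf{z}^{(j)}[k]$ — exactly the gap that the stronger Condition 2 closes.
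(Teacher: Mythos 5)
Your proposal is correct and follows essentially the same route as the paper's own proof: eigenvalue-by-eigenvalue analysis exploiting the decoupling of the Jordan form, Lemma \ref{lemmaSEP} to make the root nodes in $\mathcal{S}_j$ autonomous information sources, a spanning forest rooted at $\mathcal{S}_j$ (guaranteed by Condition 2) whose topological ordering renders the non-root weight block strictly lower triangular and hence nilpotent, the Kronecker eigenvalue identity to conclude Schur stability, and ISS to propagate convergence to the non-root nodes. The only differences are cosmetic (your $\mathbf{W}^{j}_{22}$ is the paper's $\mathbf{W}^{j}_{12}$), plus two small additions on your part — spelling out the reachability argument behind the forest's existence and noting that undetectable eigenvalues necessarily lie in $\Lambda_{U}(\mathbf{A})$ — which the paper leaves implicit.
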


\begin{proof}
Let a system $(\mathbf{A,C})$ and graph $\mathcal{G}$ satisfy Condition $2$. Consider $\lambda_j \in \Lambda_{U}(\mathbf{A}).$ Let $\mathcal{S}_j=\{m_1, \cdots, m_{\tau_j}\}$ be the set of root nodes for eigenvalue $\lambda_j$, where $\tau_j=|\mathcal{S}_j|$. Define $\mathbf{e}^{(j)}_{m_i}[k] \triangleq \hat{\mathbf{z}}^{(j)}_{m_i}[k]-\mathbf{z}^{(j)}[k]$ as the error in estimation of the $j$-th sub-state by the $m_i$-th node. The errors in estimation of $\mathbf{z}^{(j)}[k]$ for the nodes that can detect $\lambda_j$ are stacked into the composite error vector $\mathbf{E}^{(j)}_{\mathcal{O}}[k]$, defined as 
\begin{equation}
\mathbf{E}^{(j)}_{\mathcal{O}}[k]\triangleq \left[\begin{array}{c}\mathbf{e}^{(j)}_{m_1}[k]\\ \vdots \\ \mathbf{e}^{(j)}_{m_{\tau_j}}[k]\end{array}\right].
\end{equation}
 Similarly, we stack the estimation errors of $\mathbf{z}^{(j)}[k]$ for the nodes that cannot detect $\lambda_j$ into the composite error vector $\mathbf{E}^{(j)}_{\mathcal{UO}}[k]$, defined as
 \begin{equation}
\mathbf{E}^{(j)}_{\mathcal{UO}}[k]\triangleq \left[\begin{array}{c}\mathbf{e}^{(j)}_{m_\mathsmaller{{\tau_j+1}}}[k]\\ \vdots \\ \mathbf{e}^{(j)}_{m_{N}}[k]\end{array}\right],
\end{equation}
where $\mathcal{V}\setminus\mathcal{S}_j=\{m_{\mathsmaller{\tau_j+1}}, \cdots, m_{N}\}$ represents a topological ordering of the non-root nodes consistent with a set of directed trees rooted at $\mathcal{S}_j$, which span $\mathcal{V}\setminus\mathcal{S}_j$. Such a set of trees exists based on Condition 2. Noting from (\ref{eqn:REPsystem}) that $\mathbf{z}^{(j)}[k+1]=\mathbf{J}_j\mathbf{z}^{(j)}[k]$, and using equation (\ref{eqn:consensusREP}), for $\lambda_j \in \mathcal{UO}_i$, the estimation error for the $j$-th sub-state by the $i$-th node is
\begin{equation}
\mathbf{e}^{(j)}_{i}[k+1]=\mathbf{J}_j\sum_{l\in\mathcal{N}_i}w^{j}_{il}\mathbf{e}^{(j)}_{l}[k].
\label{eqn:indverrcons}
\end{equation}
From (\ref{eqn:indverrcons}), it follows that the relation between $\mathbf{E}^{(j)}_{\mathcal{O}}[k]$ and $\mathbf{E}^{(j)}_{\mathcal{UO}}[k]$ can be expressed via the equation
\begin{equation}
\begin{split}
\mathbf{E}^{(j)}_{\mathcal{UO}}[k+1]&=\left(\left[\begin{array}{c c}\mathbf{W}^{j}_{11}&\mathbf{W}^{j}_{12}\end{array}\right]\otimes\mathbf{J}_j\right)\left[\begin{array}{c}\mathbf{E}^{(j)}_{\mathcal{O}}[k] \\ \mathbf{E}^{(j)}_{\mathcal{UO}}[k] \end{array}\right]\\
&=\left(\mathbf{W}^{j}_{12}\otimes\mathbf{J}_j\right)\mathbf{E}^{(j)}_{\mathcal{UO}}[k]+\left(\mathbf{W}^{j}_{11}\otimes\mathbf{J}_j\right)\mathbf{E}^{(j)}_{\mathcal{O}}[k],
\end{split}
\label{eqn:ErrcompREP}
\end{equation}
where the weight matrix $\mathbf{W}^{j}=\left[\begin{array}{c c}\mathbf{W}^{j}_{11}&\mathbf{W}^{j}_{12}\end{array}\right]$ contains weights based on equation (\ref{eqn:indverrcons}) (note that $\mathbf{W}^{j}\in \mathbb{R}^{(N-\tau_j) \times N}$, and $\mathbf{W}^{j}_{11}$ represents the first $\tau_j$ columns of $\mathbf{W}^{j}$, where $\tau_j=|\mathcal{S}_j|$). Using the same design philosophy for the consensus weights as in Condition 1, we assign non-zero consensus weights  only along the branches of the spanning forest rooted at $\mathcal{S}_j$. In this way, $\mathbf{W}^{j}_{12}$ can be made lower triangular with zero eigenvalues (without violating the stochasticity condition imposed by equation (\ref{eqn:REPstochasticity})). We conclude that by an appropriate choice of weights as described above, we can achieve $\Lambda_{U}(\mathbf{W}^{j}_{12} \otimes \mathbf{J}_j) = \emptyset$ (even though $\lambda_j \in \Lambda_{U}(\mathbf{A}))$.

Based on Lemma \ref{lemmaSEP}, each node $i \in \mathcal{V}$ can locally design its observer gain $\mathbb{L}_i$ to stabilize the local Luenberger observer error dynamics. Specifically, the error dynamics corresponding to the estimation of $\mathbf{z}^{(j)}[k]$ for each root node in $\mathcal{S}_j$ is guaranteed to asymptotically converge based on Lemma \ref{lemmaSEP}, i.e., the composite error $\mathbf{E}^{(j)}_{\mathcal{O}}[k]$ asymptotically converges to zero. Using ISS, we infer from (\ref{eqn:ErrcompREP}) that $\lim_{k\to\infty} \mathbf{E}^{(j)}_{\mathcal{UO}}[k]=\mathbf{0}$. The same argument holds $\forall \lambda_j \in \Lambda_{U}(\mathbf{A})$. Thus, we conclude that every node can asymptotically estimate the sub-states of $\mathbf{z}[k]$ corresponding to both its detectable and undetectable eigenvalues, i.e., it can estimate the entire transformed state $\mathbf{z}[k]$ asymptotically. As $\mathbf{x}[k]=\mathcal{T}\mathbf{z}[k]$, each node can asymptotically estimate the true state $\mathbf{x}[k]$ as well.
\end{proof}
\begin{remark}
Based on the distributed observer given by (\ref{eqn:Cond2obs}), note that the dimension of the observer is equal to the sum of the dimensions of the vectors $\hat{\mathbf{s}}_i[k]$ and $\hat{\mathbf{z}}_{\mathcal{UO}_i}[k]$, and can be higher than the dimension of the state $\mathbf{x}[k]$ (as $\hat{\mathbf{s}}_i[k]$ can have a dimension higher than $\hat{\mathbf{z}}_{\mathcal{O}_i}[k]$). This augmentation is a consequence of the fact that at present, although we are able to estimate the portion $\mathbf{w}_{\mathcal{O}_i}[k]$ of the vector $\mathbf{z}_{\mathcal{UO}_i}[k]$ via the local Luenberger observer maintained at node $i$ (given by (\ref{eqn:luenCase1})), we use this information only for updating $\hat{\mathbf{z}}_{\mathcal{O}_i}[k]$, and rely on consensus for estimating the entire vector $\hat{\mathbf{z}}_{\mathcal{UO}_i}[k]$ (via equation (\ref{eqn:Cond2obs}c)). This redundancy in information may be potentially overcome using a more complicated scheme where one uses consensus for estimating only the portion of the state corresponding to the vector $\mathbf{w}_{\mathcal{UO}_i}[k]$ in equation (\ref{eqn:Case1obs1}); to avoid cluttering the exposition, we omit further investigation of this issue in the present paper. However, for certain special cases of Condition $2$ where the system matrix has more structure, it may be possible to construct distributed  observers without state augmentation, using the approach proposed for Condition $2$. For example, if the system has distinct eigenvalues, then the matrix $\bar{\mathbf{C}}_{\mathcal{UO}_i}$ in \eqref{eqn:Case1form} will be zero $\forall i\in\mathcal{V}$, thereby precluding the need for state augmentation.
\end{remark}
\subsection{Summary of the Estimation Scheme for Systems and Graphs Satisfying Condition $2$}
Similar to the strategy adopted for systems and graphs satisfying Condition 1, the distributed observer design for systems and graphs satisfying Condition 2 also constitutes an initialization (or design) phase which needs to be implemented just once, followed up by an estimation phase. However, the extra structure provided by Condition 2 allows each of these phases to be implemented in a distributed manner. The main steps of the overall scheme are summarized as follows:

\textbf{Design Phase:}
\begin{itemize}
\item All nodes simultaneously perform a common co-ordinate transformation, which brings the state matrix $\mathbf{A}$ into its Jordan canonical form. Using this form, each node identifies its locally detectable and undetectable eigenvalues.
\item For each $\lambda_j \in \Lambda_{U}(\mathbf{A})$, the nodes run a distributed algorithm to construct trees with roots in $\mathcal{S}_j$, which span $\mathcal{V}\setminus\mathcal{S}_j.$\footnote{We shall discuss such an algorithm shortly.} These trees guide the design of the consensus weight matrices to be used for each unstable and marginally stable eigenvalue of the system.
\end{itemize}

\textbf{Estimation Phase:}
\begin{itemize}
\item Each node uses a Luenberger observer for estimating the sub-states corresponding to the detectable eigenvalues, and runs consensus dynamics for estimating the sub-states corresponding to the undetectable eigenvalues. These dynamics are captured by (\ref{eqn:Cond2obs}).
\end{itemize}

\subsubsection*{Construction of Spanning Trees for Consensus Weight Design} 

To construct directed trees rooted at nodes in $\mathcal{S}_j$, which span $\mathcal{V}\setminus\mathcal{S}_j$, for each $\lambda_j \in \Lambda_{U}(\mathbf{A})$, (these trees in turn guide the construction of the consensus weight matrices) one can use standard distributed tree construction algorithms (such as Breadth-First Search (BFS)) \cite{lynch}. The essential idea behind such algorithms is that each desired root node broadcasts a message indicating that it is a root, which is then passed through the network.  When a node first receives such a message from a neighbor, it adopts that neighbor as its parent in the tree and rebroadcasts the message.  At the conclusion of the algorithm, all nodes are aware of their parent in their tree (as long as there is a path from the root node(s) to all other nodes).  
For our purpose, such a distributed algorithm can be implemented by the nodes for each $\lambda_j\in\Lambda_{U}(\mathbf{A})$, with $\mathcal{S}_j$ representing the roots of the tree. In this way, for each $\lambda_j\in\Lambda_{U}(\mathbf{A})$, a node in $\mathcal{V}\setminus\mathcal{S}_j$ will identify its parent node in one of the directed trees rooted at $\mathcal{S}_j$, and as discussed in the proof of Theorem \ref{theo:L2}, will assign a non-zero consensus weight to only this parent node for the estimation of $\mathbf{z}^{(j)}[k]$.  

Note that the simpler distributed observer scheme developed for systems and graphs satisfying Condition 2 may not always be applicable to systems and graphs satisfying Condition 1. To see this, consider the system and graph given by equation (\ref{eqn:examplesystem}) and Figure \ref{fig:example}, which satisfies Condition 1 but not Condition 2. As pointed out in Remark \ref{remark:Cond}, the only root node that can detect the unstable eigenvalue $\lambda=2$ belongs to the source component comprised of the isolated node 3. To implement the scheme developed for systems and graphs satisfying Condition 2, one needs to construct a tree rooted at node 3 which spans nodes 1 and 2. This is clearly not possible for this particular network; hence the method developed for Condition 2 is not applicable to this system and graph. In this case, the general distributed observer framework developed for systems and graphs satisfying Condition 1 would still apply, however.

\section{Robustness to Communication Losses and Sensor Failures}
\label{sec:timevar}
The general framework for distributed observer design that we have described thus far (the idea of using Luenberger observers for estimating the locally detectable states and consensus dynamics for the remaining states) allows for various extensions. Here, we discuss how our approach can deal with communication losses and sensor failures or attacks.
\subsection{Extension to Communication Losses} We first discuss how the estimation scheme developed for systems and graphs satisfying Condition 1 can be extended to account for  time-varying communication graphs that are a consequence of communication link failures.\footnote{A similar analysis can be performed for Condition 2 and hence the details are omitted for brevity.} In other words, the network varies with time due to failure or recovery of subsets of edges of the baseline graph $\mathcal{G}$. We denote this class of switching signals by $\Omega$. Under the class of switching signals  $\Omega$, the time-varying communication graph is denoted by $\mathcal{G}_{\sigma(k)}=(\mathcal{V,E}_{\sigma(k)})$, where $\sigma(k)$ is a finite index set representing the different switching modes, and $\mathcal{E}_{\sigma(k)}\subseteq\mathcal{E}$ (recall that $\mathcal{E}$ represents the set of edges of the baseline graph $\mathcal{G}$). For the rest of the analysis in this section, we assume  that the baseline communication graph $\mathcal{G}$ is strongly-connected, i.e., $\mathcal{G}$ is strongly-connected in the absence of link failures.

We make two minor modifications to our original estimation strategy (refer to Section \ref{sec:summary1}) to account for communication losses. First, during the design phase, for estimation of a given sub-state, we construct a spanning directed acyclic graph (DAG) (instead of a spanning tree) rooted at the corresponding source node to allow for the possibility of having redundant communication links. Accordingly, in the DAG constructed for  estimation of sub-state $j$ (where $j\in\{1,\cdots,N\}$), let the set of parent nodes for node $i$ be denoted by $\mathcal{P}^{(j)}_{i}$. Second, for  estimating the $j$-th sub-state, where $j \in \{1, \cdots ,N\}\setminus\{i\}$, the $i$-th node does the following: if at a certain time-step it receives information from only a proper subset of its parent set $\mathcal{P}^{(j)}_{i}$, then it still employs equation \eqref{eqn:consensus}, redistributing the weights among such a subset so as to preserve the stochasticity constraint imposed by \eqref{eqn:stochasticity}.\footnote{Notice that for a given node $i$, the observer update equations \eqref{eqn:luen} and \eqref{eqn:unobsestimate} are unaffected by changes in the network structure.} For the more critical scenario where node $i$ gets disconnected from all its parents in the set $\mathcal{P}^{(j)}_{i}$ at a given time-step $k$, it updates $\hat{\mathbf{z}}^{(j)}_{i}[k]$ using previous values of its \textit{own} estimates in the following way: 
\begin{equation}
\hat{\mathbf{z}}^{(j)}_{i}[k+1]=\mathbf{A}_{jj}\hat{\mathbf{z}}^{(j)}_{i}[k]+\sum_{l=1}^{j-1}\mathbf{A}_{jl}\hat{\mathbf{z}}^{(l)}_{i}[k].
\label{eqn:consensus_timevar}
\end{equation}

A direct consequence of the update rule \eqref{eqn:consensus_timevar} is that the matrix $\mathbf{A}_{jj}$, which may be unstable, appears in the block diagonal position corresponding to node $i$ in the lower block triangular error dynamics matrix $\mathbf{M}_j$ given by \eqref{eqn:bigerrmatrix}. 
We make the following assumption on the class of switching signals $\Omega$.
\begin{assumption} $\Omega$ has the following property: there exists a positive integer $T$ such that in every time interval of the form $[kT,(k+1)T]$, where $k\in\mathbb{N}$, for each sub-state $j\in\{1,\cdots,N\}$, for every node $i\in\mathcal{V}\setminus\{j\}$, there exists an integer $l \in [kT,(k+1)T]$ such that the graph $\mathcal{G}_{\sigma{(l)}}$ contains an edge from at least one node in $\mathcal{P}^{(j)}_i$ to node $i$.\footnote{Note that within a given interval of the form $[kT,(k+1)T]$, for a specific sub-state $j$, $l$ might be different for the different non-source nodes in $\mathcal{V}\setminus\{j\}$.}
\label{assump:switch}
\end{assumption}
In words, Assumption \ref{assump:switch} simply implies that within each interval $[kT,(k+1)T]$, for each sub-state $j$, every non-source node $\mathcal{V}\setminus\{j\}$ is guaranteed to receive information from at least one of its parents in $\mathcal{P}^{(j)}_i$ at least once over the entire interval. With this in mind, we now state the main result of this section.
\begin{theorem}
\label{theo:timevar2}
Consider a system $(\mathbf{A,C})$ and strongly-connected baseline communication graph $\mathcal{G}$  satisfying Condition 1. Suppose the class of switching signals $\Omega$ satisfies Assumption \ref{assump:switch}. Then, equations (\ref{eqn:luen}), (\ref{eqn:consensus}) with time-varying weights,  (\ref{eqn:unobsestimate}) and \eqref{eqn:consensus_timevar} form a distributed observer.
\end{theorem}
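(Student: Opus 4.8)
The plan is to replicate the inductive architecture of the proof of Theorem~\ref{GEPthm} — an outer induction on the sub-state index $j$, with the coupling from sub-states $1,\dots,j-1$ absorbed by ISS — and to replace the single ``Schur stability of $\mathbf{M}_j$'' step by an argument that the \emph{time-varying} homogeneous error dynamics for sub-state $j$ is uniformly stable despite the now-present (possibly unstable) diagonal block $\mathbf{A}_{jj}$. The sub-state decomposition, the transformed dynamics \eqref{eqn:gendyn}--\eqref{eqn:unobs_state}, and the convergence of the unobservable sub-state (which rides on \eqref{eqn:unobsestimate}, is unaffected by link failures, and relies only on the stability of $\mathbf{A}_{\mathcal{U}}$ guaranteed by Condition~1) all carry over verbatim. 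The only genuinely new content concerns the consensus error dynamics \eqref{eqn:errconsensus} together with the disconnection fallback \eqref{eqn:consensus_timevar}.

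Fix sub-state $j$ and assume $\mathbf{E}^{(l)}[k]\to\mathbf{0}$ for all $l<j$. The root node $j$ runs the time-invariant Luenberger observer \eqref{eqn:luen}, so $\mathbf{e}^{(j)}_{j}[k]\to\mathbf{0}$ exactly as in Theorem~\ref{GEPthm}. For the non-root nodes, order them by a topological ordering of the spanning DAG rooted at node $j$ and stack their errors into $\tilde{\mathbf{E}}^{(j)}[k]$; then \eqref{eqn:errconsensus} and \eqref{eqn:consensus_timevar} yield
\[
\tilde{\mathbf{E}}^{(j)}[k+1]=\left(\mathbf{W}^{j}_{22}[k]\otimes\mathbf{A}_{jj}\right)\tilde{\mathbf{E}}^{(j)}[k]+\mathbf{f}_{j}[k],
\]
where $\mathbf{f}_{j}[k]$ collects the root's error (entering through $\mathbf{W}^{j}_{21}[k]\otimes\mathbf{A}_{jj}$) and the coupling from lower sub-states, both of which vanish asymptotically. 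The structural observation I would exploit is that, because non-zero weights are placed only along DAG branches and a node with no available parent falls back on \eqref{eqn:consensus_timevar}, each $\mathbf{W}^{j}_{22}[k]$ is lower triangular with diagonal entries in $\{0,1\}$: its $i$-th diagonal entry is $0$ whenever node $i$ hears from at least one parent in $\mathcal{P}^{(j)}_i$ at step $k$, and equals $1$ only when $i$ is momentarily isolated.

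The key step is then a dead-beat argument. Using the mixed-product property $(\mathbf{X}\otimes\mathbf{A})(\mathbf{Y}\otimes\mathbf{A})=(\mathbf{XY})\otimes\mathbf{A}^{2}$, the transition matrix across one interval $[kT,(k+1)T]$ factors as $\Phi_k=P_k\otimes\mathbf{A}_{jj}^{T}$, where $P_k\triangleq\prod_{l}\mathbf{W}^{j}_{22}[l]$ is lower triangular with $i$-th diagonal entry $\prod_{l}(\mathbf{W}^{j}_{22}[l])_{ii}$. Assumption~\ref{assump:switch} forces each non-root node $i$ to hear from some parent at least once in the interval, so that product contains a zero factor; hence every diagonal entry of $P_k$ vanishes and $P_k$ is strictly lower triangular. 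Since a product of $d$ strictly lower triangular $d\times d$ matrices is zero (each factor pushes the occupied band down by one), taking $d=N-1$ consecutive intervals gives $\Phi_{k+d-1}\cdots\Phi_k=(P_{k+d-1}\cdots P_k)\otimes\mathbf{A}_{jj}^{dT}=\mathbf{0}$. Thus the homogeneous part of $\tilde{\mathbf{E}}^{(j)}[k]$ is annihilated every $(N-1)T$ steps regardless of the instability of $\mathbf{A}_{jj}$; the transition matrices over bounded horizons stay uniformly bounded (stochastic weights lie in $[0,1]$); and $\mathbf{f}_{j}[k]\to\mathbf{0}$. Invoking ISS then gives $\tilde{\mathbf{E}}^{(j)}[k]\to\mathbf{0}$, hence $\mathbf{E}^{(j)}[k]\to\mathbf{0}$. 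Closing the induction and appending the unchanged unobservable-sub-state argument completes the proof.

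I expect the main obstacle — and the genuine departure from Theorem~\ref{GEPthm} — to be precisely the stability of this switched homogeneous system: at individual time-steps $\mathbf{M}_j[k]$ is no longer Schur, since an unstable $\mathbf{A}_{jj}$ sits on the diagonal whenever a node is isolated, so one cannot argue instant-by-instant. The resolution rests on two facts that must be checked with care: that the Kronecker product factors through the product of the weight matrices, and that Assumption~\ref{assump:switch} is \emph{exactly} what forces a zero diagonal in the per-interval product $P_k$ — turning an a priori delicate switched-stability question into a clean nilpotency computation with finite-time (dead-beat) convergence.
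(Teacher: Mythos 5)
Your proposal is correct, and its skeleton coincides with the paper's proof in Appendix \ref{sec:proof_timevarconstr}: induction on the sub-state index, products of the error-transition matrices over windows $[kT,(k+1)T]$, the observation that Assumption \ref{assump:switch} forces the diagonal weight of every non-root node to vanish at least once per window, and an ISS-type argument to absorb the coupling from lower sub-states and handle the unobservable sub-state. Where you genuinely depart is the concluding stability mechanism. The paper keeps the root node inside the composite error vector, notes that each interval product $\bar{\mathbf{M}}_j(k)$ is lower block triangular with zero diagonal blocks at the non-root nodes and $(\mathbf{A}_{jj}-\mathbf{L}_j\mathbf{C}_{jj})^T$ at the root, concludes that each such product is Schur stable, and then—using the fact that there are only finitely many switching modes—invokes the external switched-systems result \cite[Proposition 2.9]{liberzon} to get convergence of the arbitrarily switched product. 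You instead pull the root's error out as a vanishing input, factor each interval transition via the Kronecker mixed-product identity as $P_k\otimes\mathbf{A}_{jj}^{T}$, and observe that each $P_k$ is strictly lower triangular, so the product of $N-1$ consecutive interval transitions is exactly the zero matrix. This buys two things: the argument is self-contained (no appeal to a switched-stability theorem, and hence no need for the family of interval products to be finite—uniform boundedness of the stochastic weights suffices), and it yields the strictly stronger dead-beat conclusion that the homogeneous non-root error dynamics are annihilated in finite time, with only the vanishing forcing terms left to decay. The paper's formulation, in turn, stays closer in form to the proof of Theorem \ref{GEPthm} and to standard switched-systems machinery. Both proofs ultimately rest on the same two facts you isolate: the common, time-invariant triangular ordering induced by the spanning DAG, and Assumption \ref{assump:switch} zeroing the relevant diagonal in every window.
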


The proof is provided in Appendix \ref{sec:proof_timevarconstr}. 
\subsection{Extension to Node Attacks}
Dealing with node failures (or more generally, malicious node behavior) requires additional careful analysis and incorporation of redundancy in the network and sensors so as to meet a two-fold objective - the collective measurements of the non-compromised nodes must ensure global detectability of the state, and the network must contain a sufficient amount of redundancy to prevent the failed (or malicious) nodes from acting as bottlenecks between the correctly functioning nodes. Furthermore, even with such redundancy, one has to carefully design the estimation algorithm to leverage the redundancy so as to guarantee asymptotic state reconstruction by the non-faulty nodes. In addition to these challenges, one needs to account for the fact that under malicious attacks, a node might receive corrupt information from compromised neighboring nodes. This, in general, is a non-trivial task. A preliminary investigation of the node attack scenario (which subsumes sensor failures) has been conducted in our recent work \cite{mitraCDC} where we provide sufficient conditions on the network that guarantee asymptotic state reconstruction for systems with distinct real eigenvalues (a special case of Condition $2$) despite a certain number of malicious nodes in the network.\footnote{In addition to admitting a fully distributed implementation as mentioned earlier, robustness to node attacks is another key advantage afforded by our approach for systems and graphs satisfying Condition 2. This is in contrast with existing work on distributed observers, which, to the best of our knowledge, requires a centralized design phase and does not provide any theoretical guarantees against node faults or attacks.} Specifically, we show that under the provided conditions on the network topology, we can extend the approach that we have developed in this paper to construct robust DAGs, along with a slightly modified consensus update rule, in order to provide security guarantees for the normal nodes.  This illustrates the benefits of the framework for distributed estimation that we have provided in this paper; extensions of these ideas to tackle malicious behavior under more general system dynamics is an area of ongoing research.

\begin{figure}[t]
\begin{center}
%\hspace*{-1cm}
\begin{tikzpicture}
[->,shorten >=1pt,scale=.5, minimum size=10pt, auto=center, node distance=3cm,
  thick, every node/.style={circle, draw=black, thick},]
\tikzstyle{block} = [rectangle, draw, fill=green!20, 
    text width=5em, text centered, rounded corners, minimum height=4em];
    \node [block]  at (3,4) (plant) {LTI plant};
\node [circle, draw, fill=blue!20](n1)  at (0,0)  (1) {1};
\node [circle, draw, fill=blue!20](n2) at (3,0)   (2)  {2};
\node [circle, draw, fill=blue!20](n3) at (6,-0)  (3)  {3};

\path[every node/.style={font=\sffamily\small}]
(plant)
             edge[red, bend right] node [right] {$y_1[k]$} (1)
             edge[red] node [right] {$y_2[k]$} (2)
             %edge[red, bend left] node [right] {$y_3[k]$} (3)
                       
    (1)
        edge [bend right] node [] {} (2)
        
    (2)
        edge [bend right] node [] {} (1)
        edge [bend left] node [] {} (3);

\end{tikzpicture}
\end{center}
\caption{Network topology for the example system considered in Section~\ref{sec:example}.}
\label{fig:sim}
\end{figure}
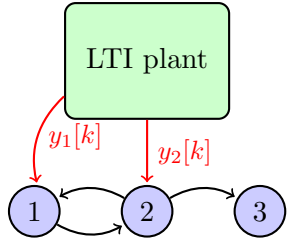

\section{Example}
\label{sec:example}
\begin{figure}
\begin{center}
\begin{tabular}{c c}
\includegraphics[scale=0.4]{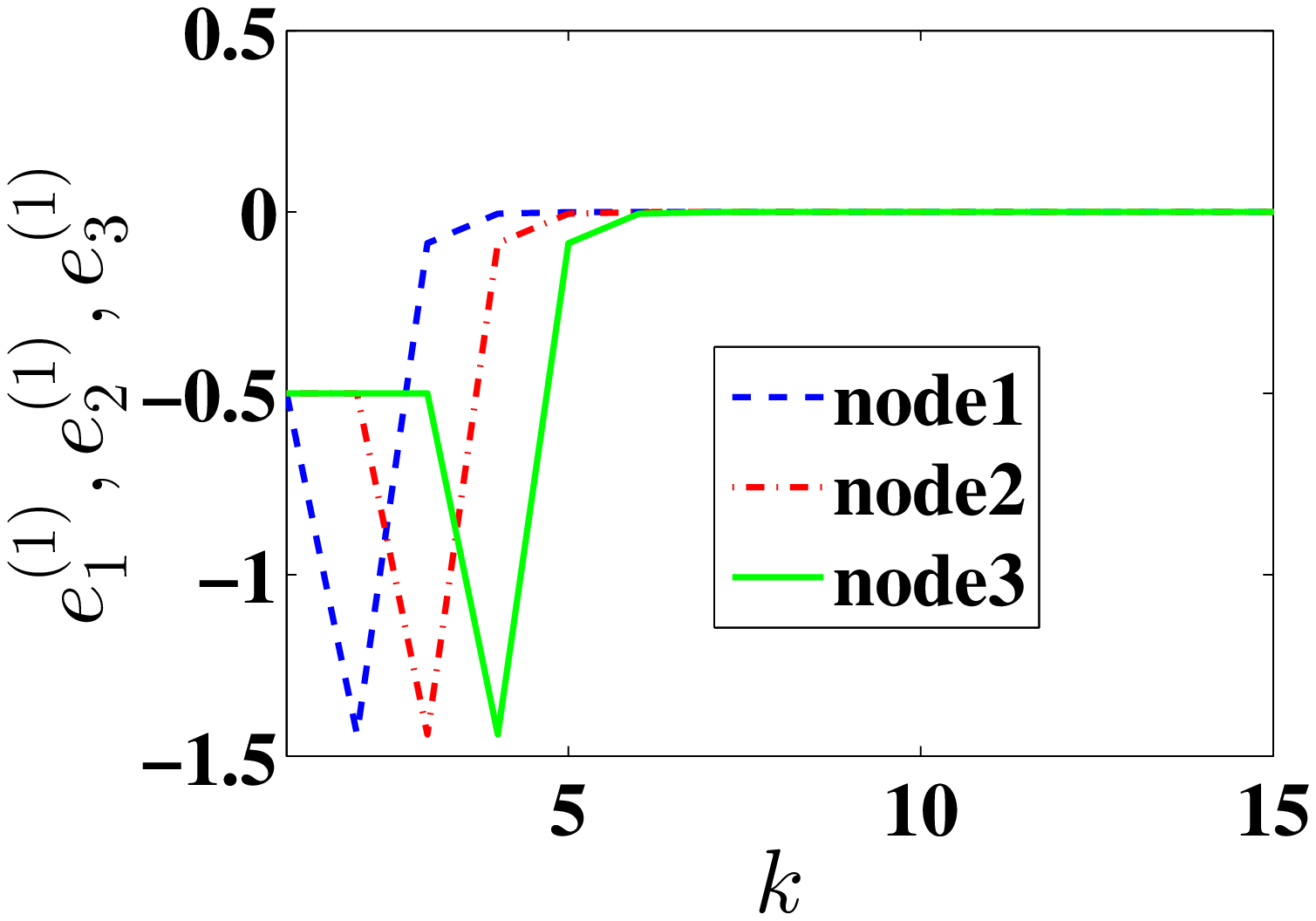}  & \includegraphics[scale=0.4]{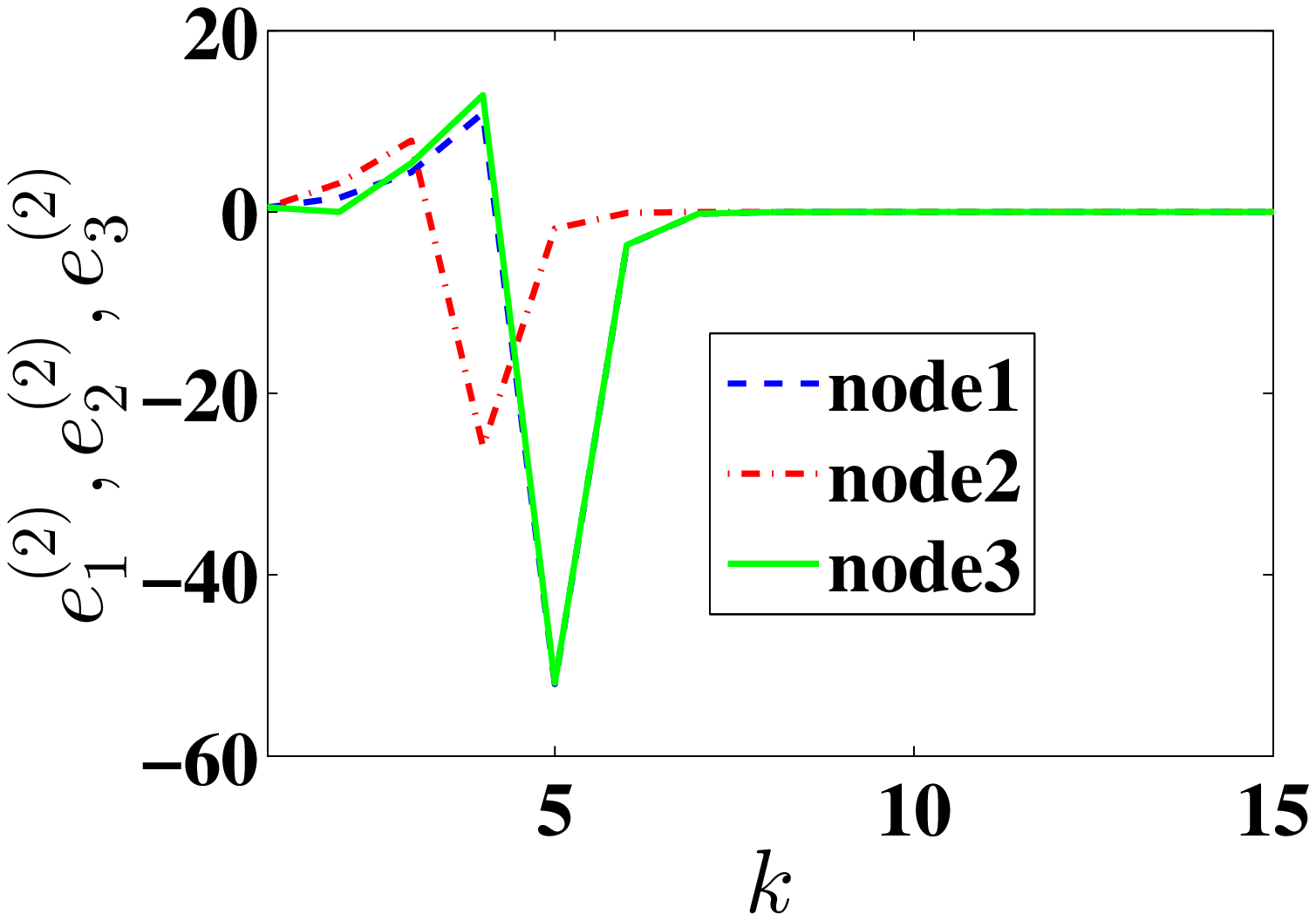}\\ 
\includegraphics[scale=0.4]{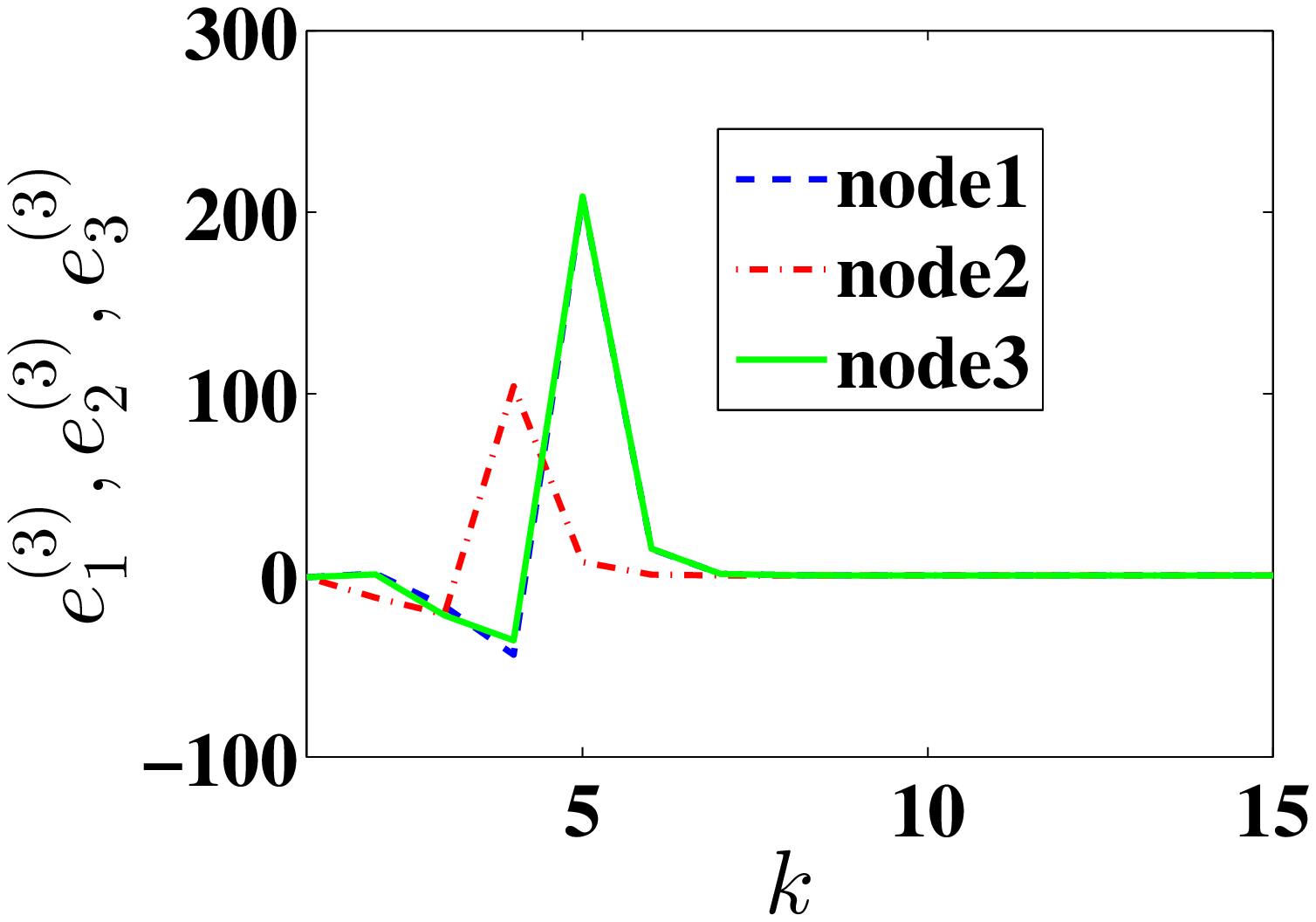}  &
\end{tabular}
\caption{{\it (Top left)} Error dynamics for the first state. {\it (Top right)} Error dynamics for the second state. {\it (Bottom left)} Error dynamics for the third state.} 
\label{fig:Simexamples}
\end{center}
\end{figure}

In this section, we present an example to illustrate the scheme developed for Condition 1. To this end, consider the network shown in Figure \ref{fig:sim}, and the associated system and measurement matrices given by
\begin{equation}
\mathbf{A}=\begin{bmatrix} 1&0&0\\2&2&0\\-5&0&2 \end{bmatrix}, \mathbf{C}_1=\begin{bmatrix} 4&4&1 \end{bmatrix}, \mathbf{C}_2=\begin{bmatrix} 11&13&3\\16&18&4 \end{bmatrix}, \mathbf{C}_3=0.
\label{eqn:Sim_example}
\end{equation}
Note that the system is not detectable from any individual node. It can be verified that the pair $(\mathbf{A},\begin{bmatrix} \mathbf{C}_1 \\ \mathbf{C}_2 \end{bmatrix})$, associated with the source component comprised of nodes 1 and 2, is observable. Hence, the scheme developed for Condition 1 is applicable to this setting. To implement the multi-sensor observable canonical decomposition, we start by bringing the pair $(\mathbf{A},\mathbf{C}_1)$ to the observable canonical form.  
This is achieved using $\mathbf{T}_1 = \left[\begin{smallmatrix} 4&7&0\\4&8&-0.2425\\1&2&0.9701 \end{smallmatrix}\right]$. Since $(\mathbf{A},\begin{bmatrix} \mathbf{C}_1 \\ \mathbf{C}_2 \end{bmatrix})$ is observable, in this specific case, we have $\mathbf{A}_{22}=\mathbf{A}_{1\mathcal{U}}$ and $\mathbf{A}_{2\mathcal{U}}=0$ (here we use notations consistent with the ones described in the proof of Proposition \ref{transformations} in Appendix  \ref{sec:proof_multisensor}). Thus, we have $\mathbf{T}_2=\mathbf{I}_3$ and $\mathcal{T}=\mathbf{T}_1\mathbf{T}_2=\mathbf{T}_1$. Using $\mathcal{T}$, we perform the multi-sensor observable canonical decomposition to obtain
\begin{equation}
\bar{\mathbf{A}}=\begin{pmat}[{.|}]
-10.9412 & -22.6471 & 0 \cr
6.8235 & 13.9412 & 0 \cr\-
-21.3431 & -37.3505 & 2 \cr
\end{pmat}, \hspace{1mm} \bar{\mathbf{C}}_1=\begin{pmat}[{.|}] 33&62&0 \cr \end{pmat}, \hspace{1mm}
\bar{\mathbf{C}}_2=\begin{pmat}[{.|}] 99&187&-0.2425\cr
                                      140&264&-0.4851\cr
                                      \end{pmat}.
\label{eqn:Sim_multi}
\end{equation}
Based on the above decomposition, and the theory developed for Condition 1, it is easy to see that the transformed state $\mathbf{z}[k]=\mathcal{T}^{-1}\mathbf{x}[k]$ will contain two sub-states and the unobservable sub-state will be of zero-dimension. The local Luenberger observer gains are chosen as $\mathbf{L}_1={\begin{bmatrix} -4.6404 & 2.5174 \end{bmatrix}}^{T}$ and $\mathbf{L}_2=\begin{bmatrix} -1.641 & -3.282 \end{bmatrix}$. Noting that node 1 is responsible for estimating sub-state 1, and node 2 for sub-state 2, the consensus weight vectors used by the two nodes are $\mathbf{w}_{11}={\begin{bmatrix} 1&0 \end{bmatrix}}^{T}, \mathbf{w}_{12}={\begin{bmatrix} 0&1 \end{bmatrix}}^{T}, \mathbf{w}_{21}={\begin{bmatrix} 1&0 \end{bmatrix}}^{T}$, and $\mathbf{w}_{22}={\begin{bmatrix} 0&1 \end{bmatrix}}^{T}$.\footnote{The reader is referred to the discussion in Section \ref{sec:compact} for a description of these weight vectors.} Using these design parameters, nodes 1 and 2 maintain the following estimators of the form (\ref{eqn:Overallest}):
\begin{align}
\hat{\mathbf{x}}_1[k+1]&=\mathbf{N}\hat{\mathbf{x}}_1[k]+\mathcal{T}\mathbb{H}_1\left(\mathbf{y}_1[k]-\mathbf{C}_1\hat{\mathbf{x}}_1[k]\right)+\mathbb{G}_{11}\hat{\mathbf{x}}_1[k]+\mathbb{G}_{12}\hat{\mathbf{x}}_2[k],\nonumber\\
\hat{\mathbf{x}}_2[k+1]&=\mathbf{N}\hat{\mathbf{x}}_2[k]+\mathcal{T}\mathbb{H}_2\left(\mathbf{y}_2[k]-\mathbf{C}_2\hat{\mathbf{x}}_2[k]\right)+\mathbb{G}_{21}\hat{\mathbf{x}}_1[k]+\mathbb{G}_{22}\hat{\mathbf{x}}_2[k],\nonumber
\end{align}
where
\begin{equation}
\mathbf{N}=\begin{bmatrix}0&0&0\\1.29&0&0\\-5.18&0&0\end{bmatrix}, \mathcal{T}\mathbb{H}_1=\begin{bmatrix}-0.94\\1.58\\0.39\end{bmatrix},\mathcal{T}\mathbb{H}_2=\begin{bmatrix}0&0\\0.40&0.80\\-1.59&-3.18\end{bmatrix},\nonumber
\end{equation}
\begin{equation}
\mathbb{G}_{11}=\mathbb{G}_{21}=\begin{bmatrix}1&0&0\\0.71&1.88&0.47\\0.18&0.47&0.12\end{bmatrix}, \mathbb{G}_{12}=\mathbb{G}_{22}=\begin{bmatrix}0&0&0\\0&0.12&-0.47\\0&-0.47&1.88\end{bmatrix}.\nonumber
\end{equation}
Since node 3 does not belong to any source component, it simply runs a pure consensus strategy given by $\hat{\mathbf{x}}_3[k+1]=\mathbf{A}\hat{\mathbf{x}}_2[k]$. Note that all nodes maintain observers of dimension $3$. For simulations, we let the true state $\mathbf{x}[k]$ evolve from the initial condition $\mathbf{x}[0]={\begin{bmatrix} 0.5&-0.5&1 \end{bmatrix}}^{T}$, while the initial estimates of all three nodes are set to zero. Figure \ref{fig:Simexamples} shows the estimation errors associated with each of the three states of the system (in these plots, the notation $e^{(j)}_i$ is used to denote the error in estimation of state $j$ by node $i$). These plots validate the scheme developed for Condition 1.
 
 %%%%%%%%%%%%%%%%%%%%%%%%%%%%%%%%%%
 %%%%%%%%%%%%%%%%%%%%%%%%%%%%%%%%
 
\section{Conclusions and Future Work}
\label{sec:conclusion}
In this paper, we considered the problem of distributed state estimation of a LTI system by a network of nodes. We introduced a new class of distributed observers (of the form (\ref{eqn:Overallest})), for the most general class of LTI systems, directed communication graphs and linear sensor measurement structures. This was achieved by extending the Kalman observable canonical decomposition to a setting with multiple sensors, i.e., by introducing the notion of a multi-sensor observable canonical decomposition. We also demonstrated that for certain subclasses of system dynamics and networks, one can design a distributed observer via a simpler estimation scheme which enjoys the benefit of a fully distributed design phase. The main underlying theme of our work, which unifies each of the two estimation schemes presented in this paper, is built upon the following intuition: portions of the state space can be reconstructed by a node using its own local measurements, and hence it needs to run consensus for only the portion of the state space that is not locally detectable. Finally, we discussed how our proposed framework can be extended to account for certain classes of time-varying networks as well. 

As future work, it would be interesting to explore whether low-complexity distributed algorithms for observer design (such as the one we presented for Condition 2) can be obtained for systems and graphs satisfying Condition 1. Furthermore, while we focused on noiseless dynamics in this paper for ease of exposition (like \cite{Khanobs1,Khanobs2,wang,martins3 ,ugrinov}), for stochastic systems, one needs to determine the optimal way of combining the information received from different sensors while designing consensus protocols.\footnote{It can be shown that the methods developed in this paper lead to bounded mean square estimation error in the presence of i.i.d. process and measurement noise with bounded second moments.} Another important problem is the design of distributed functional observers motivated by the following observation: for large systems, it may be computationally demanding for  every node to estimate the entire state vector; in practice, the nodes may only care about estimating certain linear functions of the state. Finally, an investigation of how the proposed framework fares against network induced issues   such as delays and communication asynchronicities would also be interesting.

\bibliographystyle{unsrt} 
\bibliography{References}

\appendix

\section{Proof of Proposition \ref{transformations}}
\label{sec:proof_multisensor}

\begin{proof}
We outline the sequence of transformations that need to be carried out.

\textbf{Step 1} : \textit{Transformation at Sensor 1}

Consider the coordinate transformation $\mathbf{x}[k]=\mathbf{T}_1\mathbf{z}_1[k]$. Here, $\mathbf{T}_1$ is the non-singular matrix that performs an observable canonical decomposition of the pair $(\mathbf{A},\mathbf{C}_1)$, yielding
\begin{equation}
\begin{split}
\underbrace{\left[\begin{array}{c}
\mathbf{z}^{(1)}_{}[k+1]\\
\mathbf{z}^{(1)}_{\mathcal{U}}[k+1]
\end{array}\right]}_{\mathbf{z}_1[k+1]} &= \underbrace{\begin{bmatrix} 
                      \mathbf{A}_{11} & \mathbf{0} \\
                      \mathbf{A}_{1X}          & \mathbf{A}_{1\mathcal{U}}
  \end{bmatrix} }_{\bar{\mathbf{A}}_1=\mathbf{T}^{-1}_1\mathbf{A}\mathbf{T}_1}                    
\underbrace{\left[\begin{array}{c}
\mathbf{z}^{(1)}_{}[k]\\
\mathbf{z}^{(1)}_{\mathcal{U}}[k]
\end{array}\right]}_{\mathbf{z}_1[k]}, \\
\mathbf{y}_1[k] &= \underbrace{\begin{bmatrix} \mathbf{C}_{11} & \mathbf{0} \end{bmatrix}}_{\bar{\mathbf{C}}_1 = \mathbf{C}_1\mathbf{T}_1}\mathbf{z}_1[k]. \enspace 
\end{split}
\label{eqn:obs1}
\end{equation}
Thus, we have
\begin{equation}
\begin{split}
\mathbf{z}^{(1)}_{}[k+1]&=\mathbf{A}_{11}\mathbf{z}^{(1)}_{}[k],\\
\mathbf{y}_1[k]&= \mathbf{C}_{11}\mathbf{z}^{(1)}_{}[k].
\end{split}
\label{eqn:truedyn1}
\end{equation}
Let $\mathbf{z}^{(1)}_{}[k] \in {\mathbb{R}}^{o_1}$. 

\textbf{Step 2} : \textit{Transformation at Sensor 2}

We know the following:
\begin{equation}
\begin{split}
\mathbf{y}_2[k]&=\mathbf{C}_2\mathbf{x}[k]=\mathbf{C}_2\mathbf{T}_1\mathbf{z}_1[k]\\
       &\triangleq\begin{bmatrix} \mathbf{C}_{21} & \mathbf{C}_{21\mathcal{U}} \end{bmatrix} \left[\begin{array}{c}
\mathbf{z}^{(1)}_{}[k]\\
\mathbf{z}^{(1)}_{\mathcal{U}}[k]
\end{array}\right].
 \end{split}
 \end{equation}
 Let $\bar{\mathbf{T}}_2$ be a non-singular transformation matrix that performs an observable canonical decomposition of the pair $(\mathbf{A}_{1\mathcal{U}},  \mathbf{C}_{21\mathcal{U}})$. We now wish to identify the portion of the unobservable subspace of sensor $1$ that is observable with respect to sensor $2$. With this objective in mind, consider the coordinate transformation $\mathbf{z}_1[k]=\mathbf{T}_2\mathbf{z}_2[k]$, where the non-singular transformation matrix $\mathbf{T}_2$ is defined as
\begin{equation}
\mathbf{T}_2=\begin{bmatrix} \mathbf{I}_{o_1} & \mathbf{0} \\ \mathbf{0} & \bar{\mathbf{T}}_2
 \end{bmatrix}.
\label{eqn:T2}
\end{equation}
 This yields the following dynamics:
 \begin{equation} 
 \begin{split}
 \underbrace{\left[\begin{array}{c}
\mathbf{z}^{(1)}_{}[k+1]\\
\mathbf{z}^{(2)}_{}[k+1]\\
\mathbf{z}^{(2)}_{\mathcal{U}}[k+1]\\
\end{array}\right]}_{\mathbf{z}_2[k+1]} &= 
\underbrace{\left[
\begin{array}{c|cc}
\mathbf{A}_{11}  & \multicolumn{2}{c}{\mathbf{0}} \\
\hline
\multirow{2}{*}{$\bar{\mathbf{T}}^{-1}_2 \mathbf{A}_{1X}$}
 & 
\mathbf{A}_{22} & \mathbf{0} \\
&
\star & \mathbf{A}_{2\mathcal{U}}\\
\end{array}
\right]}_{\bar{\mathbf{A}}_2=\mathbf{T}^{-1}_2\bar{\mathbf{A}}_1\mathbf{T}_2}\underbrace{\left[\begin{array}{c}
\mathbf{z}^{(1)}_{}[k]\\
\mathbf{z}^{(2)}_{}[k]\\
\mathbf{z}^{(2)}_{\mathcal{U}}[k]\\
\end{array}\right]}_{\mathbf{z}_2[k]}, \\
\mathbf{y}_2[k] &= \hspace{6.5mm} \underbrace{\left[
\begin{array}{c|cc}
\mathbf{C}_{21} & \mathbf{C}_{22} & \hspace {2mm} \mathbf{0} \end{array}\right]}_{\bar{\mathbf{C}}_2 = \mathbf{C}_2\mathbf{T}_1\mathbf{T}_2}\mathbf{z}_2[k],
\end{split}
\label{eqn:obs2}
\end{equation}
where 
\begin{equation}
\begin{split}
\bar{\mathbf{T}}^{-1}_2\mathbf{A}_{1\mathcal{U}}\bar{\mathbf{T}}_2 &= \begin{bmatrix} \mathbf{A}_{22} & \mathbf{0} \\
\star & \mathbf{A}_{2\mathcal{U}}\end{bmatrix}, \\
\mathbf{C}_{21\mathcal{U}}\bar{\mathbf{T}}_2 &= \left[\begin{array}{cc}
\mathbf{C}_{22} & \mathbf{0} \end{array}\right].
\end{split}
\end{equation}
Let $\mathbf{z}^{(2)}_{}[k] \in {\mathbb{R}}^{o_2}$. Let $\mathbf{A}_{21}$ be the matrix formed by the first $o_2$ rows of $\bar{\mathbf{T}}^{-1}_2 \mathbf{A}_{1X}$. From (\ref{eqn:obs2}), we have
\begin{equation}
\begin{split}
\mathbf{z}^{(2)}_{}[k+1]&=\mathbf{A}_{22}\mathbf{z}^{(2)}_{}[k]+\mathbf{A}_{21}\mathbf{z}^{(1)}_{}[k],\\
\mathbf{y}_2[k]&=\mathbf{C}_{22}\mathbf{z}^{(2)}_{}[k]+\mathbf{C}_{21}\mathbf{z}^{(1)}_{}[k].
\end{split}
\label{eqn:truedyn2}
\end{equation}
Following the same design procedure, we continue the sequence of transformations, one for each sensor, until we reach the $N$-th sensor.

\textbf{Step $N$} : \textit{Transformation at Sensor $N$}

Let $\bar{\mathbf{T}}_{N}$ be a non-singular transformation matrix that performs an Observable Canonical Decomposition of the pair $(\mathbf{A}_{(N-1)\mathcal{U}},  \mathbf{C}_{N(N-1)\mathcal{U}})$. Next, consider the coordinate transformation $\mathbf{z}_{N-1}[k]=\mathbf{T}_{N}\mathbf{z}_{N}[k]$, where the non-singular transformation matrix $\mathbf{T}_{N}$ is defined as follows:
\begin{equation}
\mathbf{T}_{N}=
\left[
\begin{array}{c|c|cc}
\mathbf{I}_{o_1}  & \multicolumn{3}{c}{\mathbf{0}} \\
\hline
\multirow{4}{*}{$\mathbf{0}$}
 & 
\mathbf{I}_{o_2} & \multicolumn{2}{c}{\mathbf{0}} \\
\cline{2-4}
&
\multirow{3}{*}{$\mathbf{0}$} & \hspace{-5mm} \ddots  &\vdots\\
&  &  \mathbf{I}_{o_{(N-1)}} & \mathbf{0}\\
&  & \mathbf{0} & \bar{\mathbf{T}}_{N}\\
\end{array}
\right].
\label{eqn:Tni}
\end{equation}
Using this transformation matrix, it is easy to identify that the resulting dynamics are governed by the following equations:
\begin{equation}
\begin{split}
\mathbf{z}_{N}[k+1]&=\bar{\mathbf{A}}_{N}\mathbf{z}_{N}[k],\\
\mathbf{y}_{N}[k]&=\bar{\mathbf{C}}_{N}\mathbf{z}_{N}[k],
\end{split}
\end{equation}
where 
$\bar{\mathbf{A}}_{N}$ is equal to $\bar{\mathbf{A}}$ in equation (\ref{eqn:gen_form}) and $\bar{\mathbf{C}}_{N}$ attains the  form: 
\begin{equation}
\bar{\mathbf{C}}_{N} = \left[ \begin{array}{ccccc} 
\mathbf{C}_{{N1}} & \multicolumn{1}{c}{\mathbf{C}_{{N2}}}  &  \cdots  \mathbf{C}_{N(N-1)} & \mathbf{C}_{NN}  & \multicolumn{1}{|c}{\mathbf{0}}\\
 \end{array}
 \right].
\end{equation}
Thus, by defining the similarity transformation matrix $\mathcal{T}=\prod_{i=1}^{n}\mathbf{T}_i$, we obtain the desired result.
\end{proof}

\section{Proof of Theorem \ref{theo:timevar2}}
\label{sec:proof_timevarconstr}
\begin{proof}
Following the proof technique of Theorem \ref{GEPthm}, we induct on the sub-state number and use the same notation as in the former proof. Accordingly, note that the dynamics of the composite estimation error vector for the first sub-state, namely $\bar{\mathbf{E}}^{(1)}_{}[k]$,\footnote{Note that the entries of $\bar{\mathbf{E}}^{(1)}_{}[k]$ match a topological ordering consistent with a spanning DAG rooted at node $1$ in the baseline graph $\mathcal{G}$.} is governed by the following switched linear system model: $\bar{\mathbf{E}}^{(1)}_{}[k+1]=\mathbf{M}_1[k]\bar{\mathbf{E}}^{(1)}_{}[k].$
Here, $\mathbf{M}_1[k]$ is a time-varying matrix induced by the class of switching signals $\Omega$ and is of the structure given by \eqref{eqn:bigerrmatrix}. Since $\Omega$ satisfies Assumption \ref{assump:switch}, each non-source node $i\in\mathcal{V}\setminus\{1\}$ is guaranteed to receive information from at least one of its parents in $\mathcal{P}^{(1)}_i$ in at least one switching mode over every time interval of the form $[kT,(k+1)T]$, where $k\in\mathbb{N}$. Based on our  estimation scheme, for that corresponding switching mode, the block diagonal entry corresponding to node $i$ in the matrix $\mathbf{M}_1[k]$ will be zero. With this observation in mind, consider the following dynamics: $\bar{\mathbf{E}}^{(1)}_{}[(k+1)T]=\bar{\mathbf{M}}_1(k)\bar{\mathbf{E}}^{(1)}_{}[kT],
$
where $\bar{\mathbf{M}}_1(k)=\mathbf{M}_1[(k+1)T-1]\cdots\mathbf{M}_1[kT+1]\mathbf{M}_1[kT]$. From our prior discussion, it easily follows that $\bar{\mathbf{M}}_1(k)$ is a lower block triangular matrix with zeroes on the block-diagonal corresponding to the non-source nodes in $\mathcal{V}\setminus\{1\}$ and the entry $\mathbf{(A_{11}-L_1C_{11})}^T$ corresponding to node $1$. As the pair $(\mathbf{A}_{11},\mathbf{C}_{11})$ is observable by construction, it follows using standard arguments that  $\bar{\mathbf{M}}_1(k)$ is always a Schur stable matrix.    Since $\bar{\mathbf{M}}_1(k)$ belongs to a finite set of matrices (owing to a finite number of switching modes), we can directly use \cite[Proposition 2.9]{liberzon} to establish that $\lim_{k\to\infty} \mathbf{E}^{(1)}_{}[kT] = \mathbf{0}$ and hence $\lim_{k\to\infty} \mathbf{E}^{(1)}_{}[k] = \mathbf{0}$.

Next, suppose that $\mathbf{E}^{(j)}_{}[k] $ converges to zero asymptotically $\forall j \in \{1, \cdots ,p-1\}$, where $1 \leq p-1 \leq N-1$. The composite estimation error dynamics for sub-state $p$ over an interval of length $T$ is given by
\begin{equation}
\bar{\mathbf{E}}^{(p)}_{}[(k+1)T]=\bar{\mathbf{M}}_p(k)\bar{\mathbf{E}}^{(p)}_{}[kT]+\bar{\mathbf{F}}_p(k)\bar{\mathbf{v}}^{(p)}_{},
\end{equation}
where $\bar{\mathbf{M}}_p(k)$ is defined in the same way as $\bar{\mathbf{M}}_1(k)$, and 
\begin{equation}
\bar{\mathbf{v}}^{(p)}=\begin{bmatrix} \mathbf{v}^{(p)}[kT]\\ \vdots\\ \mathbf{v}^{(p)}[(k+1)T-1]\end{bmatrix}, \mathbf{v}^{(p)}[k]= \sum_{l=1}^{p-1} \mathbf{H}_{pl}\bar{\mathbf{E}}^{(pl)}_{}[k], \nonumber
\end{equation}

\begin{equation}
\bar{\mathbf{F}}_p(k)= \begin{bmatrix} (\mathbf{M}_p[(k+1)T-1]\cdots\mathbf{M}_p[kT+1])&\cdots&\mathbf{M}_p[(k+1)T-1]&\mathbf{I}_{No_p}\end{bmatrix}.\nonumber
\end{equation}
It follows from our induction hypothesis that $\lim_{k\to\infty} \mathbf{v}^{(p)}[k]=\mathbf{0}$. Since $\bar{\mathbf{M}}_p(k)$ is Schur stable for the class of switching signals satisfying Assumption \ref{assump:switch} (in the same way as $\bar{\mathbf{M}}_1(k)$ is Schur stable), it follows from ISS and \cite[Proposition 2.9]{liberzon} that $\lim_{k\to\infty} \mathbf{E}^{(p)}_{}[k] = \mathbf{0}$. Since the update rule \eqref{eqn:unobsestimate} for the unobservable component of the state is unaffected by changes in the network structure, the rest of the proof proceeds similarly as the proof of Theorem \ref{GEPthm}.
\end{proof}

\end{document}